\renewcommand{\[}{\begin{equation}}
\renewcommand{\]}{\end{equation}}
\newcommand{\ket}[1]{|#1\rangle}
\newcommand{\bra}[1]{\langle#1|}
\newcommand{\pro}[2]{|#1\rangle\langle#2|}
\newcommand{\mean}[1]{\langle#1\rangle}
\newcommand{\tr}{\mathrm{tr}}
\newcommand{\at}[2][]{#1|_{#2}}
\newcommand{\R}{{\hat{\rho}}}
\newcommand{\an}{\hat{a}}
\newcommand{\h}{\hat{H}}
\newcommand{\U}{\hat{U}}
\newcommand{\floor}[1]{\left\lfloor #1 \right\rfloor}
\newtheorem{lemma}{Lemma}
\newtheorem{theorem}{Theorem}
\definecolor{dfcol}{cmyk}{1, 0.2108, 0.13, 0.3}
\newcommand{\df}[1]{\ifthenelse{\boolean{}}{\textcolor{dfcol}{[{\bf DF}: #1]}}{}}
\begin{document}
\title{Fundamental Scaling Limit in Critical Quantum Metrology}
\author{Ju-Yeon Gyhm}
\email{kjy3665@snu.ac.kr}
\affiliation{Department of Physics and Astronomy, Seoul National University, 1 Gwanak-ro, Seoul 08826, Korea}

\author{Hyukjoon Kwon}
\email{hjkwon@kias.re.kr}
\affiliation{School of Computational Sciences, Korea Institute for Advanced Study, Seoul 02455, South Korea}

\author{Myung-Joong Hwang}
\email{myungjoong.hwang@duke.edu}
\affiliation{Division of Natural and Applied Sciences, Duke Kunshan University, Kunshan, Jiangsu 215300, China}

\begin{abstract}
Critical quantum metrology aims to harness critical properties near quantum phase transitions to enhance parameter estimation precision. However, critical slowing down inherently limits the achievable precision within a finite evolution time. To address this challenge, we establish a fundamental scaling limit of critical quantum metrology with respect to the total evolution time. We find that the winding number of the system's phase space trajectory determines the scaling bound of quantum Fisher information. Furthermore, we demonstrate that the exponential scaling of the quantum Fisher information can be obtained, and for this, it is necessary to increase the winding number by the total evolution time. We explicitly construct a time-dependent control to achieve optimal scaling from a simple on-off scheme depending on the system's phase and discuss its topological nature. We highlight that such an exponential scaling of quantum Fisher information remains valid even without reaching the critical point and in the presence of thermal dissipation, albeit with a decreased exponent.
\end{abstract}
\pacs{}
\maketitle

Quantum metrology~\cite{Giovannetti.2006, Giovannetti.2011} utilizes quantum mechanical principles to achieve sensing tasks beyond the classical limit~\cite{Chu2023}, which can be applied to various tasks, such as detecting gravitational wave~\cite{Schnabel10}, biological imaging~\cite{Aslam23}, improving radars~\cite{Maccone20}, and exploring the foundations of physics~\cite{Couteau23}. A crucial step in quantum metrology is to encode the parameter to be estimated in a quantum state from which non-classical advantages originate~\cite{Pezze.2018, Kwon.2019, Ge20, Maccone2020squeezingmetrology}. More precisely, the quantum Fisher information (QFI)~\cite{Braunstein.1994} characterizes the performance of quantum metrology, providing the ultimate precision of parameter estimation, as given by the Cramer-Rao bound.

In conventional quantum metrology, the preparation of a quantum probe state with a large QFI has been widely studied~\cite{Mitchell04, Leibfried04, Riedel10, Gross10, Monz11, Pezze.2018, Abhinav19, Maccone2020squeezingmetrology, MunozArias23}. An alternative direction utilizing critical phenomena in quantum systems, known as critical quantum metrology, has recently been investigated~\cite{Zanardi.2008, Macieszczak.2016, Fernandez-Lorenzo.2017, Rams.2018, Roscilde.2018, DiFresco2024, Garbe.2019, Chu.2021, Ilias.2022, Garbe_2022, Gietka.2022, Abah2022, Garbe2022, Hotter.2024} with  experimental demonstrations~\cite{Liu.2021,Ding.2022,Beaulieu.2025}. The key idea is to harness the diverging susceptibility and correlations, such as squeezing and entanglement near the critical point, to enhance metrological precision. In particular, phase transitions occurring in fully connected systems, such as the infinite-range Ising model~\cite{Ribeiro.2007} and the Dicke model~\cite{Emary.2003}, as well as the finite-component system phase transitions occurring in qubit-oscillator systems~\cite{Hwang.2015} and the Kerr resonators~\cite{Felicetti.2020,Beaulieu.2025}, have been playing an important role in developing ideas for critical quantum metrology~\cite{Garbe.2019, Ilias.2022,Chu.2021, Garbe_2022,Gietka.2022,Abah2022, Garbe2022,Hotter.2024}. This is partly because a quadratic bosonic Hamiltonian becomes an exact and effective description of the system in the thermodynamic limit, while the finite-component system realizations enable critical sensing using a small-scale quantum system~\cite{Cai.2021,Chen.2021, Zheng.2023,Wu.2024,Ilias.2024,Beaulieu.2025}. 

\begin{figure}[t]
\begin{center}
\includegraphics[width=.95\linewidth]{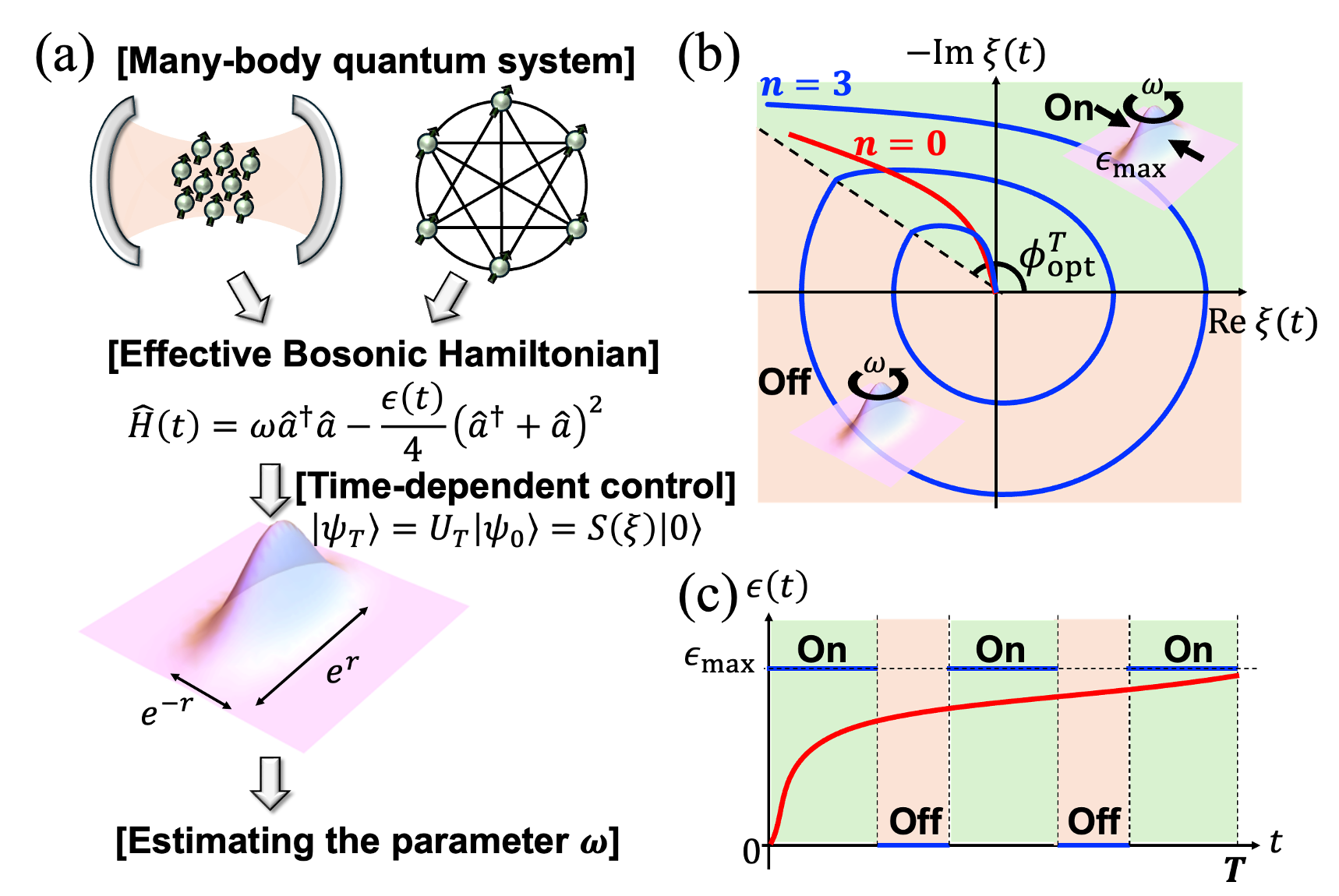}
\caption{(a) Schematics of critical quantum metrology and (b) phase space trajectories with (c) monotonic (red) and on-off (blue) controls. The winding number $n$ of the phase space trajectory governs the scaling limit of QFI. Consequently, increasing the winding number becomes necessary to achieve optimal scaling for a given evolution time $T$, which can be saturated using the on-off scheme.}
\label{Fig:Concept}
\end{center}
\end{figure}

In fully connected models, the Heisenberg scaling of the QFI can be achieved by adiabatically driving the system's ground state toward the critical point~\cite{Garbe.2019}, where the susceptibility diverges. However, this approach suffers from a critical slowing down, as the time required to reach the ground state also diverges, limiting the QFI scaling to $T^4$ in terms of the total evolution time $T$. To address this limitation, recent studies have explored various non-adiabatic protocols such as sudden quench~\cite{Chu.2021, Garbe_2022,Gietka.2022} and periodic driving~\cite{Abah2022, Garbe2022}, revealing that $T^6$ scaling and even exponential scaling can be achieved, respectively. Nevertheless, the ultimate scaling limit of critical quantum metrology with respect to the total evolution time $T$, a key performance metric in the presence of critical slowing down, remains unclear.

In this Letter, we establish a fundamental scaling limit in critical quantum metrology and construct an explicit protocol that achieves this scaling. To this end, we introduce the winding number $n$ determined by the system's phase space trajectory, which plays a central role in both bounding the QFI scaling and establishing optimal control. Our QFI scaling bound applies to both adiabatic~\cite{Garbe.2019} and non-adiabatic protocols~\cite{Chu.2021, Garbe_2022}, as well as non-monotonic protocols~\cite{ Abah2022, Garbe2022}, providing a unified perspective on various scaling behaviors of QFI from different approaches in critical quantum metrology. More surprisingly, we demonstrate that the exponential scaling of the QFI is achievable by determining the optimal winding number for a given time $T$.

We also provide an explicit on-off protocol for achieving optimal scaling, in which the winding number determines how many times to turn the control on and off, depending on the system's phase. The optimal control undergoes a radical transition at the point where the winding number changes, implying its topological nature.

Finally, we highlight that the QFI scaling with total evolution time is determined solely by the winding number rather than the ability to drive the system close to the critical point. As a result, the exponential scaling of the QFI is achievable even without reaching the critical point and remains robust under thermal dissipation.

\emph{Dynamical framework of quantum metrology.---} Suppose a time-dependent Hamiltonian of a quantum system,
\begin{equation}
\h(t) = \omega \h_0 + \h_c(t),
\end{equation}
with $\omega = \omega_0 + \delta \omega$. Here, an unknown parameter $\delta \omega$ is encoded in $\h_0$ with a reference parameter $\omega_0$, and $\h_c(t)$ can be controlled by some tunable parameters. When an initially prepared state $\ket{\psi_0}$ evolves under $\h(t)$ for total evolution time $T$, the unknown parameter $\delta \omega$ can be estimated from the measurement on the final state $\ket{\psi_T}$. In this case, the ultimate precision bound, known as the quantum Cram\'er-Rao bound~\cite{Braunstein.1994}, of estimating $\delta \omega$ is given in terms of the QFI,
\begin{equation}\label{eq:QFI_def}
{\cal F}_\omega = \left. 4 \left[ \langle \partial_\omega \psi_T | \partial_\omega \psi_T \rangle - |\langle \partial_\omega \psi_T | \psi_T \rangle|^2 \right] \right\vert_{\omega = \omega_0}.
\end{equation}

To optimize the performance of quantum metrology, the QFI should be maximized for a given total evolution time $T$, subject to constraints in $\h_c(t)$~\cite{Yuan15, Pang17, Yang22}. This framework incorporates an arbitrary time-dependent protocol $\h_c(t)$ for critical quantum metrology, incorporating conventional quantum metrology as a special case with $\h_c(t) = 0$.

\emph{Critical quantum metrology with a fully connected system.---} We focus on a fully connected system, described by the following effective Hamiltonian~\cite{Hwang.2015,Bakemeier.2012} ($\hbar = 1$),
\begin{equation}
\label{eq:effective_H}
\h(t) = \omega \an^\dagger\an - \frac{\epsilon(t)}{4} (\an^\dagger+\an)^2,
\end{equation}
where $\an$ and $\an^\dagger$ are the bosonic creation and annihilation operators, respectively, and a single control parameter $\epsilon(t)$ induces squeezing. The critical point lies at $\epsilon = \omega$, where the energy spectrum becomes continuous with the closing energy gap. We limit our analysis to the symmetric phase, $0 \leq \epsilon(t)\leq \omega$, where the order parameter remains zero prior to the onset of spontaneous symmetry breaking~\cite{Hwang.2015}.

We assume that the system is initially prepared in the vacuum state $\ket{\psi_0} = \ket{0}$ with $\epsilon = 0$. As the Hamiltonian is quadratic in $\an$ and $\an^\dagger$, the final state becomes a Gaussian state~\cite{RevModPhys.84.621}, represented as
\begin{equation}
\ket{\psi_t} = \hat{S}(\xi(t)) \ket{0} = e^{\frac{1}{2}(\xi^*(t) \hat{a}^2 - \xi(t) \hat{a}^{\dagger 2})} \ket{0},
\end{equation}
where $\hat{S}(\xi(t))$ is the squeeze operator with $\xi(t) = r(t) e^{-i\varphi(t)}$. From the Schr{\"o}dinger equation, $ \frac{d \ket{\psi_t}}{d t} = -i \h(t) \ket{\psi_t}$, the equation of motion can be written as
\
\begin{equation}
\begin{aligned}\label{eq:EOM}
    \dot{r}(t) &= \frac{\epsilon(t)}{2} \sin \varphi(t),\\
    \dot{\varphi}(t) &= 2\omega -\epsilon(t) (1 -\coth 2r(t) \cos \varphi(t)).
\end{aligned}
\end{equation}
We then obtain the explicit form of QFI from Eq.~\eqref{eq:QFI_def},
\begin{equation}\label{eq:QFI_dynamical_encoding}
    {\cal F}_\omega=2 \left|\int_0^T dt \sinh 2r (t) e^{i\theta(t)}\right|^2,
\end{equation}
where $\theta(t) = -\int_0^t  \frac{\epsilon(t')\cos \varphi(t')}{\sinh 2r(t')} dt'$ (see Appendix~\ref{section:dynamic} for detailed derivation).

\begin{figure*}[t]
\includegraphics[width=.95\linewidth]{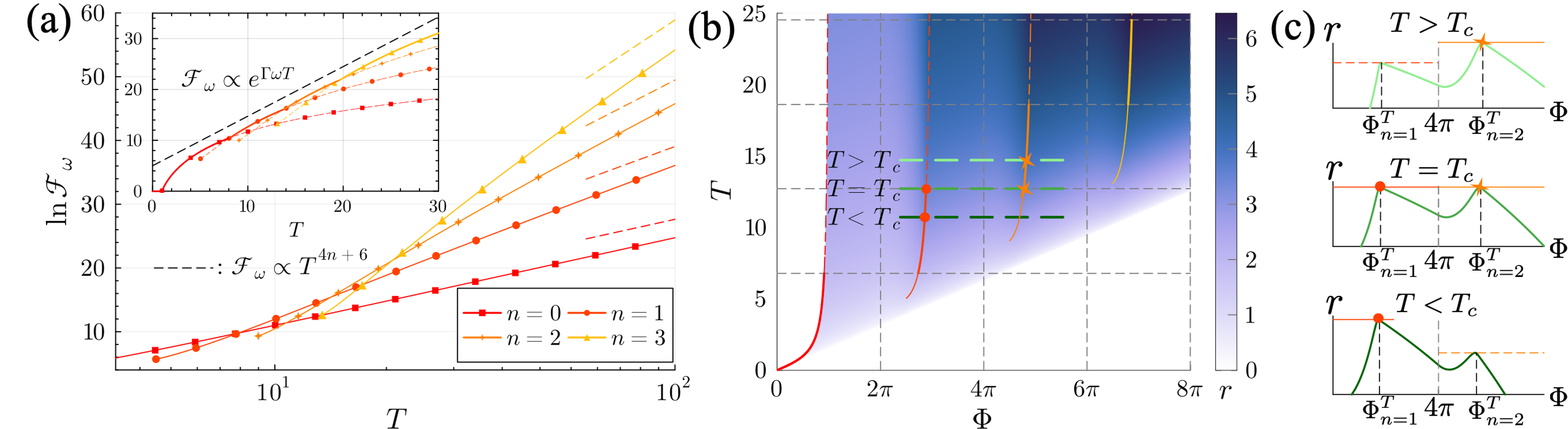}
\caption{(a) QFI for a given total evolution time $T$ by adapting the on-off control. The QFI is obtained by exact simulation using Eq.~\eqref{eq:EOM} and Eq.~\eqref{eq:QFI_dynamical_encoding}. The main figure verifies the $T^{4n+6}$ scaling at $T \gg 1$. The inset figure indicates the exponential scaling of QFI by taking the maximum QFI among different winding numbers. (b) The optimal squeezing parameter $r(T,\Phi)$ as a function of $T$ and the total phase $\Phi$. The squeezing parameter has a local maximum $r(T,\Phi_n^T)$ for each $n = \floor{\Phi/(2\pi)}$, indicated by the dashed lines.  The solid line indicates the global maximum $r_{\rm max}(T)$ for a given $T$, optimized over all possible $\Phi$. (c) The squeezing parameter and the optimal phase $\Phi_{\rm opt}^T$ near the critical time. When $T$ crosses the critical time $T_c$, the optimal phase undergoes a discontinuous jump as the optimal winding number changes from $n=1$ to $n=2$. At the critical time $T = T_c$, the optimal squeezing is given by the two winding numbers.
}\label{Fig:phase_transition}
\end{figure*}

\emph{QFI scaling in terms of the winding number.---} We show that the total phase accumulated during the dynamics, $\Phi =\int_0^T \dot{\varphi}(t) dt$, determines the scaling limit of the QFI. We introduce the winding number
\begin{equation}
n \equiv \left\lfloor \frac{\Phi}{2\pi} \right\rfloor,
\end{equation}
a topological quantity that counts the number of cycles completed throughout the entire phase-space trajectory, which is independent of the detailed shape of the trajectory (see Fig.~\ref{Fig:Concept}).  

We present a theorem that establishes the polynomial scaling bound of QFI when the winding number is fixed.
\begin{theorem} The scaling of QFI with a fixed winding number $n$ is bounded by
    \begin{equation}
    \mathcal{F}_\omega(T) \leq c_n T^{4n+6} + o(T^{4n+6}),
    \end{equation}
where $c_n$ is independent of $T$ and $o(T^{4n+6})$ grows much slower than $T^{4n+6}$.
\label{theorem:2(n+1)} 
\end{theorem}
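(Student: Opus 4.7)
The plan starts from an explicit integration of the EOM for $r$. Equation~\eqref{eq:EOM} gives $2\dot r=\epsilon\sin\varphi$, which integrates to
\begin{equation*}
\sinh 2r(t)=\sinh I(t),\qquad I(t)\equiv\int_0^t \epsilon(s)\sin\varphi(s)\,ds.
\end{equation*}
Combined with the triangle inequality applied to Eq.~\eqref{eq:QFI_dynamical_encoding},
\begin{equation*}
\mathcal{F}_\omega(T)\leq 2\left(\int_0^T\sinh I(t)\,dt\right)^{2},
\end{equation*}
the whole problem reduces to bounding the ``squeezing action'' $I(t)$ subject to $\epsilon(t)\in[0,\omega]$ and the winding-number constraint. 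Since the direct estimate $|I(t)|\leq\omega t$ is exponential after exponentiation, the task is to exploit cancellations in $\int\epsilon\sin\varphi\,ds$ forced by the winding.

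Next I would change variables from $t$ to $\varphi$. For $r$ sufficiently large, $\coth 2r\approx 1$, so $\dot\varphi=2\omega-\epsilon+\epsilon\coth 2r\cos\varphi\geq 2(\omega-\epsilon)\geq 0$, making $\varphi$ a valid monotone coordinate. Writing
\begin{equation*}
I(T)=\int_0^\Phi \frac{\epsilon\sin\varphi}{2\omega-\epsilon+\epsilon\coth 2r\cos\varphi}\,d\varphi,
\end{equation*}
with total phase excursion $\Phi< 2\pi(n+1)$, I would split the integral into $n+1$ single-winding pieces $I_k$ of time durations $\tau_k$. The per-winding estimate I aim for is $I_k\leq 2\log(\omega\tau_k)+O(1)$, which is essentially the critical-quench bound: inside one winding, the extremal control is the on-off bang-bang profile that turns $\epsilon=\omega$ on while $\sin\varphi>0$ and $\epsilon=0$ otherwise, producing the familiar $r\sim\log(\omega\tau)$ growth seen when $\tan(\varphi/2)=\omega t$.

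Summing over windings and maximizing $\sum_k\log(\omega\tau_k)$ subject to $\sum_k\tau_k\leq T$ by AM--GM (i.e.\ concavity of $\log$) yields $I(T)\leq 2(n+1)\log(\omega T/(n+1))+O(1)$, hence $\sinh 2r(T)\leq c'_n T^{2n+2}+o(T^{2n+2})$. Applying the same reasoning pointwise at each $t\leq T$ (with $k(t)\leq n$ completed windings) and integrating gives $\int_0^T\sinh 2r(t)\,dt\leq c''_n T^{2n+3}+o(T^{2n+3})$, so that squaring delivers the announced bound $\mathcal{F}_\omega(T)\leq c_n T^{4n+6}+o(T^{4n+6})$.

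The main obstacle is the per-winding extremal estimate $I_k\leq 2\log(\omega\tau_k)+O(1)$ uniformly over admissible controls $\epsilon(t)\in[0,\omega]$, rather than for the specific on-off or constant-drive cases. A natural route is Pontryagin's maximum principle for the auxiliary problem of maximizing $I_k$ at fixed winding duration, which should single out the bang-bang profile as the extremizer; the difficulty is that the singular factor $1/(2\omega-\epsilon+\epsilon\coth 2r\cos\varphi)$ near $\varphi\approx\pi$ and $\epsilon\approx\omega$ makes the variational analysis delicate. One must additionally handle the small-$r$ transient separately, where $\dot\varphi$ is no longer bounded below by $2(\omega-\epsilon)$ and the change of variable to $\varphi$ is invalid; in that regime $\sinh 2r$ is $O(1)$ and contributes only to the lower-order terms, but a short-time estimate is needed to make this rigorous.
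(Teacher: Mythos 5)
Your overall architecture matches the paper's: bound $\mathcal{F}_\omega\leq 2\bigl(\int_0^T\sinh 2r\,dt\bigr)^2$, establish a per-winding growth bound of the form ``$2r$ increases by at most $2\log(\omega\Delta_k)+O(1)$ during the $k$-th winding,'' combine the $n+1$ windings via AM--GM, then integrate and square to get $T^{4n+6}$. The paper's Lemma~\ref{lemma:bound_of_r} is exactly your per-winding estimate, stated multiplicatively as $\cosh 2r(t_k+\Delta_k)\leq(\omega^2\Delta_k^2+1)\cosh 2r(t_k)$, and the subsequent assembly (product over $k$, AM--GM, $\int_0^T(\omega^2\tau^2/(n+1)+1)^{n+1}d\tau$, square) is the same as yours. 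So the skeleton is right.

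The genuine gap is that you have not proved the per-winding estimate, and you yourself flag it as the main obstacle. Your proposed route --- Pontryagin's maximum principle to show the bang-bang profile is extremal for $I_k$ at fixed winding duration --- is precisely the delicate variational problem the paper avoids, and carrying it out rigorously (existence of an optimizer, exclusion of singular arcs, the degeneracy near $\varphi\approx\pi$, $\epsilon\approx\omega$) is a substantial piece of work that your proposal only names. The paper instead proves a control-\emph{independent} bound by an elementary monotonicity argument: in the coordinates $x=\sinh 2r\cos\varphi$, $y=\sinh 2r\sin\varphi$ one has $\frac{d}{dt}\bigl[x+\sqrt{x^2+y^2+1}\bigr]=-2(\omega-\epsilon(t))y\leq 0$ in the upper half-plane for \emph{any} admissible $\epsilon(t)\in[0,\omega]$, which caps $\dot y\leq\omega\bigl(x+\sqrt{1+x^2+y^2}\bigr)\leq\omega e^{2r(t_k)}$ and then $\frac{d}{dt}\cosh 2r=\epsilon y\leq\omega y$ integrates to the quadratic-in-$\Delta_k$ bound. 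No extremizer is ever identified in the upper-bound proof; optimality of the on-off control is established separately and only in the large-$r$ asymptotic regime. This monotone functional is the missing idea. Two secondary points: (i) the paper never changes variables to $\varphi$ in the upper-bound proof, so the small-$r$ transient you worry about (where $\dot\varphi$ can be negative because $\coth 2r$ diverges) never needs separate treatment --- it only needs the observation that $r$ is non-increasing whenever $\sin\varphi\leq 0$, plus a second case for trajectories entering the upper half-plane at $x<0$; (ii) your per-winding bound $I_k\leq 2\log(\omega\tau_k)+O(1)$ is negative for small $\tau_k$ and should be stated as $\log(\omega^2\tau_k^2+1)$ so that the product over windings is well defined before applying AM--GM.
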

An explicit form of $c_n$ can be found in the Appendix~\ref{section:fintinte_winding}. This implies that the winding number $n$ determines the order of the polynomial scaling of the QFI with respect to $T$. Based on this, we demonstrate the following no-go theorem on critical quantum metrology with any monotonically increasing $\epsilon(t)$:
\begin{theorem}\label{theorem:mono_increasing_g}
    For any control with monotonically increasing $\epsilon(t)$, the QFI scaling cannot exceed $T^{6}$ as the winding number remains zero.
\end{theorem}
We note that the monotonically increasing $\epsilon(t)$ with a trivial winding number $n=0$, encompasses both adiabatic~\cite{Garbe.2019} and quenching processes~\cite{Chu.2021, Garbe_2022} widely studied in critical quantum metrology. Thus, Theorem~\ref{theorem:mono_increasing_g} demonstrates that the QFI in such monotonic protocols is fundamentally limited to $T^6$ scaling, irrespective of the rate at which $\epsilon(t)$ changes, while the sudden quench protocol~\cite{Chu.2021} saturates the optimal scaling.

Theorems~\ref{theorem:2(n+1)} and \ref{theorem:mono_increasing_g} establish a fundamental limitation on the QFI scaling when the winding number $n$ is fixed, regardless of the total evolution time $T$. Surprisingly, however, when the winding number $n$ increases with $T$, this limitation can be overcome. In this case, we demonstrate that the exponential scaling of the QFI is achievable with the following bound of the exponent:
\begin{theorem}
\label{theorem:exponetial_g=1} 
The scaling bound of the QFI for a long-time limit ($T \gg 1$) is given by
\begin{equation}
\mathcal{F}_\omega(T) \propto e^{\Gamma \omega T}
\end{equation}
with $\Gamma \approx 0.9745$. This bound can be saturated by taking the winding number $n \approx 0.169 \omega T$.
\end{theorem}
Theorem~\ref{theorem:exponetial_g=1} identifies that the exponential scaling of QFI stems from increasing the winding number with total evolution time. Compared to previous results utilizing periodic driving with a fixed time period~\cite{Garbe2022}, Theorem~\ref{theorem:exponetial_g=1} improves the exponent by approximately a factor of $7$. We also note that this represents the ultimate bound on the exponent, achievable by optimizing the winding number for a given total evolution time.

\emph{On-off control for the optimal QFI scaling.---} Having established the fundamental scaling limit of the QFI, it is important to identify a time-dependent control, $\epsilon(t)$, that realizes the optimal scaling. Remarkably, this can be achieved by a simple on-off scheme based on the phase of the quantum state.

As we are interested in the scaling behavior in the long-time limit $T \gg 1$, we consider a system with sufficiently large squeezing, $|r| \gg 1$ after a finite time. In this limit, Eq.~\eqref{eq:EOM} becomes
\begin{equation}
\begin{aligned}\label{eq:EOM_large_r}
	\dot{r}(t) & \approx  \epsilon(t) \sin (\varphi/2)\cos(\varphi/2),\\
	\dot{\varphi}(t) &\approx 2\omega - 2\epsilon(t) \sin^2 (\varphi/2),
\end{aligned}
\end{equation}
and $\theta(t)$ becomes constant over time. The QFI in Eq.~\eqref{eq:QFI_dynamical_encoding} is then simplified as ${\cal F}_\omega \approx 2\left(\int_0^T dt \sinh(2r(t))\right)^2$, solely governed by squeezing.

We note that $\varphi(t)$ is independent of $r$ and monotonically increasing by $t$ as $\dot \varphi (t) \geq 0$. This allows us to express $\epsilon(t) = \frac{2\omega - \dot\varphi}{2\sin^2(\varphi/2)}$, which leads to an integral expression of the squeezing as
\begin{equation}\label{eq:integral_without_epsilon}
r(T, \Phi) = \int_0^{\Phi} d\varphi \left( \frac{2\omega}{\dot\varphi}-1  \right) \frac{\cot(\varphi/2)}{2},
\end{equation}
with two constraints of the total evolution time $T$ and the total phase $\Phi = \int_0^T \dot\varphi(t) dt$. The control parameter's range $0\leq \epsilon(t)\leq \omega$ yields the boundary condition $\frac{1}{2}\leq\tfrac{\omega}{\dot{\varphi}}\leq \frac{\sec^2(\varphi/2)}{2}$.

Let us first consider the case with a fixed winding number $n$. We show that there is a unique local maximum of $r(T, \Phi)$ within the region $n\pi \leq \Phi < (n+1)\pi$. The corresponding optimal protocol is given by the following on-off scheduling (see Appendix~\ref{section:fintinte_winding} for more details),
\begin{equation}
\frac{\epsilon}{\omega} = 
\begin{cases}
1 & (\cot (\varphi/2)  \geq \cot(\phi_n^T/2) )\\
0 & (\cot  (\varphi/2)  < \cot(\phi_n^T/2)),
\end{cases}
\end{equation}
where $\phi_n^T$ is determined by $T$ and the winding number $n$. The optimal squeezing behaves as
\begin{equation}
r(T,\Phi_n^T) \equiv \max_{n\pi \leq \Phi \leq (n+1)\pi} r(T,\Phi)  \approx (n+1)\ln (\omega T)+c_n,
\end{equation}
with some constant $c_n$. Here, $\Phi_n^T$ is the total phase within $n \leq \frac{\Phi}{2\pi} < n+1$ that gives the optimal squeezing. This achieves to the optimal scaling in Theorem~\ref{theorem:2(n+1)} as $\mathcal{F}_{\omega}\approx 4 \left| \int_0^T dt\sinh^2(2r)\right|\propto T^{4n+6}$~(see Appendix~\ref{section:fintinte_winding} for more details).

Next, we find the global optimum of squeezing $r(T,\Phi)$ for a given total evolution time $T$, by maximizing over all possible winding numbers,
\begin{equation}
r_\mathrm{max}(T) = \max_{n}r(T,\Phi_n^T) = r(T,\Phi^T_{\rm opt})\approx\frac{\Gamma}{4}\omega T,
\end{equation}
where $\Phi_{\rm opt}^T$ is the optimal phase among $\{ \Phi_n^T\}$ that gives the optimal squeezing. As $r_{\rm max}$ linearly scales with $T$, the QFI $\mathcal{F}_{\omega}\approx 2 \left| \int_0^T dt\sinh^2 \left(\frac{\Gamma}{2}\omega T \right)\right| \propto e^{\Gamma \omega T}$ scales exponentially on $T$, achieving the optimal scaling in Theorem~\ref{theorem:exponetial_g=1}. We also note that the optimal winding number linearly increases as $n^{T}_\mathrm{opt} \approx 0.169 \omega T$ for a sufficiently large $T$. The numerical simulation in Fig.~\ref{Fig:phase_transition}(a) verifies that the on-off controls yield the optimal scaling of QFI for both cases with a fixed winding number (Theorem~\ref{theorem:2(n+1)}) and with an increasing winding number by total evolution time (Theorem~\ref{theorem:exponetial_g=1}).

Interestingly, upon increasing $T$, we observe a series of abrupt jumps in the optimal phase $\Phi_{\rm opt}^T$, accompanied with a unit increase in the optimal winding number. At each transition point, the maximum squeezing $r_{\rm max}(T)$ changes its functional form, from $\propto (n+1) \ln(\omega T)$ to $\propto (n+2) \ln(\omega T)$. This discontinuity associated with the optimal winding number is a consequence of a first-order phase transition in the squeezing parameter, as illustrated in Fig.~\ref{Fig:phase_transition}(b) and (c).  This reflects the underlying topological nature in the optimal phase-space trajectory, marked by a change in the topological winding number.

\emph{Scaling of the QFI away from the critical point.---} We further explore the role of criticality in QFI scaling by considering a scenario where the system cannot reach the critical point. For a fixed winding number, the squeezing parameter $r$ saturates at a finite maximum value, which is in stark contrast to the critical case (Theorem~\ref{theorem:2(n+1)}), where $r$ continues to increase with the evolution time:
\begin{theorem}\label{theorem:fixed_n_withouT_criticality}
    For $0\leq \epsilon(t) \leq \epsilon_{\rm max} < \omega$ and a fixed winding number $n$, the squeezing $r$ is upper bounded as
    \begin{equation}
        \sinh 2r(T)  \leq 
        \frac{1}{(1-(\epsilon_{\rm max}/\omega))^{n+1}},
    \end{equation}
regardless of the total evolution time $T$. This implies that the QFI eventually saturates to $T^2$.
\end{theorem}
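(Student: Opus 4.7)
The plan is to carry the large-$r$ integral representation~\eqref{eq:integral_without_epsilon} of $r(T,\Phi)$ over to the sub-critical regime and then bound the total squeezing by a pointwise optimization across at most $n+1$ productive half-cycles of $\varphi$. First I would invert Eq.~\eqref{eq:EOM_large_r} to $\epsilon = (2\omega-\dot\varphi)/(2\sin^2(\varphi/2))$ and impose the new cap $\epsilon(t)\le \epsilon_{\max}<\omega$ to obtain the pointwise window
\begin{equation}
2(\omega - \epsilon_{\max}\sin^2(\varphi/2)) \le \dot\varphi \le 2\omega.
\end{equation}
In particular $\dot\varphi\ge 2(\omega-\epsilon_{\max})>0$, so $\varphi$ remains a legitimate integration variable and Eq.~\eqref{eq:integral_without_epsilon} is well-defined for every admissible control.

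Next I would maximize the integrand of Eq.~\eqref{eq:integral_without_epsilon} pointwise in $\varphi$. Since $(2\omega/\dot\varphi - 1)\ge 0$ and $\cot(\varphi/2)$ switches sign between the half-intervals $(2k\pi,(2k+1)\pi)$ and $((2k+1)\pi,2(k+1)\pi)$, the optimum is the bang-bang schedule $\epsilon=\epsilon_{\max}$ on $\cot$-positive intervals (forcing $\dot\varphi$ to its lower limit) and $\epsilon=0$ on $\cot$-negative intervals (where the integrand then vanishes identically). On each productive half-interval the substitution $u=\sin^2(\varphi/2)$ gives the closed form
\begin{equation}
\int_{2k\pi}^{(2k+1)\pi}\frac{\epsilon_{\max}\sin(\varphi/2)\cos(\varphi/2)}{2(\omega-\epsilon_{\max}\sin^2(\varphi/2))}\,d\varphi = \tfrac{1}{2}\ln\frac{1}{1-\epsilon_{\max}/\omega}.
\end{equation}

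The winding-number constraint $\Phi<2(n+1)\pi$ leaves at most $n+1$ such intervals available, so summing yields $r(T) \le \tfrac{n+1}{2}\ln[1/(1-\epsilon_{\max}/\omega)]$ and hence $\sinh 2r \le \tfrac{1}{2}e^{2r}\le (1-\epsilon_{\max}/\omega)^{-(n+1)}$, as claimed. Because this cap is independent of $T$, applying the triangle inequality to Eq.~\eqref{eq:QFI_dynamical_encoding} gives $\mathcal{F}_\omega \le 2T^2 \sup_t \sinh^2 2r(t)$, which produces the stated $T^2$ saturation.

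The main step where I expect resistance is justifying the reduction to a strictly pointwise optimization in $\varphi$: this decoupling is transparent in the large-$r$ regime leading to Eq.~\eqref{eq:integral_without_epsilon}, because there $\dot\varphi$ depends only on $(\varphi,\epsilon)$, and the admissible set of controls is itself pointwise. Outside that regime one has the exact $\dot\varphi = 2\omega - \epsilon(1 - \coth 2r \cos\varphi)$, where the $\coth 2r$ term couples $r$ back into $\dot\varphi$. I would close this gap either by observing that the extra $\epsilon\coth 2r\cos\varphi$ only raises $\dot\varphi$ relative to its large-$r$ value on the productive $\cot$-positive half-cycles (tightening rather than violating the pointwise bound), or by recasting the argument symplectically: for $\epsilon(t)<\omega$ the instantaneous Bogoliubov frequency $\sqrt{\omega(\omega-\epsilon)}$ stays real, so the squeezing generated per completed half-cycle is intrinsically capped by the same logarithmic increment $\tfrac{1}{2}\ln[1/(1-\epsilon_{\max}/\omega)]$, and winding-number counting again supplies the factor $n+1$.
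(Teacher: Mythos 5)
Your proposal reproduces the correct per-cycle increment (the half-cycle integral $\tfrac12\ln[1/(1-\epsilon_{\max}/\omega)]$ is exactly what the paper's exact analysis yields asymptotically), but as a proof of the stated upper bound it has a genuine gap that your own closing paragraph does not repair. The entire argument is built on the large-$r$ reduction leading to Eq.~\eqref{eq:integral_without_epsilon}, yet the content of the theorem is precisely that $r$ remains bounded by a $T$-independent constant --- a constant that is \emph{small} when $\epsilon_{\max}/\omega$ and $n$ are small, so the regime $|r|\gg1$ in which your integral representation is valid may never be entered. Two concrete failures follow. First, your claimed patch that the exact correction $\epsilon(\coth 2r-1)\cos\varphi$ ``only raises $\dot\varphi$'' on the productive half-cycles is false: on $\varphi\in(2k\pi+\pi/2,(2k+1)\pi)$ one has $\cos\varphi<0$ while $\sin\varphi>0$, so the correction \emph{lowers} $\dot\varphi$ there, the system lingers longer where $\dot r=\tfrac{\epsilon}{2}\sin\varphi>0$, and the exact $dr/d\varphi$ \emph{exceeds} its large-$r$ value --- the pointwise bound is violated exactly where squeezing is being generated. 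Second, in the exact dynamics $\dot\varphi=2\omega-\epsilon(1-\coth(2r)\cos\varphi)$ can become negative for small $r$ and $\cos\varphi<0$, so $\varphi$ is not monotone and cannot serve as a global integration variable; the change of variables underlying Eq.~\eqref{eq:integral_without_epsilon} breaks down. The appeal to a real Bogoliubov frequency $\sqrt{\omega(\omega-\epsilon)}$ is a plausible heuristic for constant $\epsilon$, not an argument for arbitrary time-dependent admissible controls.

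The paper avoids all of this by working with the exact equations in the variables $x=\sinh(2r)\cos\varphi$, $y=\sinh(2r)\sin\varphi$ and introducing the quantity $D(t)=(2\omega-\epsilon_{\max})\sqrt{x^2+y^2+1}+\epsilon_{\max}x$, which is shown to be non-increasing whenever $y\geq0$ for \emph{any} control with $\epsilon(t)\leq\epsilon_{\max}$. Evaluating $D$ at the entry and exit of each upper-half-plane excursion (with separate cases for entry at $x>0$ and at $x<0$) gives the exact multiplicative bound $\cosh(2r(t_{k+1}))\leq(1-\epsilon_{\max}/\omega)^{-1}\cosh(2r(t_k))$ per unit of winding, valid at all $r$ including $r=0$; iterating $n+1$ times from $\cosh(2r(0))=1$ yields the theorem. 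If you want to salvage your route, you would need to replace the large-$r$ integral by such an exact monotone functional --- which is essentially reconstructing the paper's Lemma rather than bypassing it.
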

Nevertheless, the exponential scaling of QFI can still be achieved by allowing the winding number to increase with the total evolution time, which leads to the following generalization of Theorem~\ref{theorem:exponetial_g=1}:
\begin{theorem}\label{theorem:expon_g<1}
With the control parameter in the range $0 \leq \epsilon(t) \leq \epsilon_\mathrm{max}$, the fundamental scaling limit is given as
\begin{equation}
\mathcal{F}_\omega(T) \propto e^{\Gamma(\epsilon_\mathrm{max}) \omega T}.
\end{equation}
\end{theorem}
We highlight that $\Gamma(\omega) = \Gamma$ described in Theorem~\ref{theorem:exponetial_g=1}, and  $\Gamma(\epsilon_\mathrm{max})$ \emph{smoothly} decreases with decreasing $\epsilon_\mathrm{max}$ and does not vanish even when $\epsilon_\mathrm{max}$ is far from the critical point $\omega$ (see Fig.~\ref{FIG:exp_without_critical_and_under_disspation}(a)). This observation has interesting implications for designing the optimal protocol to enhance QFI, suggesting that implementing a non-monotonic control to increase the winding number is a more crucial factor than driving the system near the critical point. Since the energy gap opening due to the finite-size effect plays a role analogous to moving away from the critical point in critical dynamics~\cite{Hwang.2015}, our analysis is expected to remain valid for capturing the scaling behavior of the QFI beyond the thermodynamic limit.

\begin{figure}[t]
\begin{center}
\includegraphics[width=.85\linewidth]{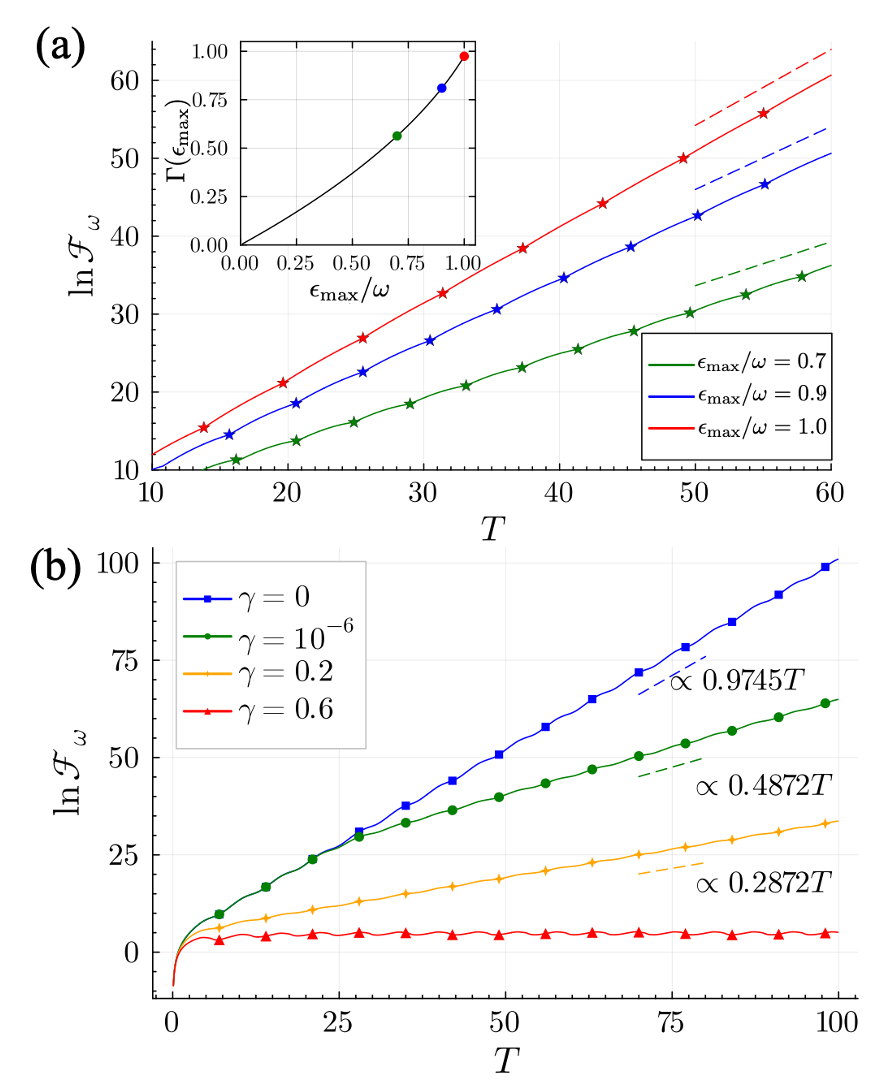}
\caption{(a) Exponential scaling of QFI with $\epsilon_{\rm max}/\omega= 0.7$, $0.9$, and $1.0$. The star marker indicates the critical time at which the winding number changes. The dashed line indicates the exponential scaling with $\Gamma(\epsilon_{\rm max})$. The inset figure describes the exponent $\Gamma(\epsilon_{\rm max})$. (b) Exponential scaling of QFI with various thermalization rate $\gamma = 0$, $0.2$, $0.6$, and $10^{-6}$. The dash lines guide the exponential scaling of QFI as $\mathcal{F}_\omega \propto e^{(\Gamma/2 -\gamma)T}$. For both cases, the QFI is obtained by the optimized on-off control.
}\label{FIG:exp_without_critical_and_under_disspation}
\end{center}
\end{figure}

\emph{Robustness under thermal dissipation.---}
We analyze the scaling limit of QFI under thermal dissipation by considering the following Lindblad equation,
\begin{equation}\label{eq:Lindblad_eq}
\begin{split}
\frac{d}{dt}\R=& - i[\h(t),\R] + \gamma (\bar{n}+1) (\an \R \an^\dagger -\tfrac{1}{2}\{\an^\dagger \an, \R \})\\
&+ \gamma \bar{n} (\an^\dagger \R \an -\tfrac{1}{2}\{\an \an^\dagger, \R \}),
\end{split}
\end{equation}
where $\gamma$ is the thermalization rate, $\bar{n}$ is the mean occupation number of the bath, and $\{ A, B\}= AB + BA$.

Remarkably, the exponential scaling persists under thermal dissipation but with a decreasing exponent~(see Appendix~\ref{section:exponential} for more details),
    \begin{equation}
    \mathcal{F}_\omega(T) \propto e^{ \left( \frac{\Gamma(\epsilon_{\rm max})  \omega }{2} - \gamma 
    \right)T}.
    \end{equation}
The exponent decreases with increasing thermalization rate $\gamma$ but is not affected by the mean occupation number $\bar{n}$. The factor $1/2$ in the exponent arises from the long-time behavior under thermal dissipation $T \gg \ln(1/\gamma)$. For effectively closed dynamics within the time window $T \lesssim \ln(1/\gamma)$, the exponent is given by $\Gamma(\epsilon_{\rm max})$. These two different behaviors can be confirmed through numerical simulation in Fig.~\ref{FIG:exp_without_critical_and_under_disspation}(b).

\emph{Remarks.---}
We have established the fundamental scaling limit of critical quantum metrology for a given total evolution time $T$. We have demonstrated that when the winding number $n$ in phase space trajectory is fixed, the QFI scaling is upper bounded by ${\cal F}_\omega(T) \propto T^{4n+6}$. Our work reveals the role of the topological properties of phase space trajectories as a unifying principle for maximizing accuracy, applicable across various critical metrology scenarios explored in previous studies. Furthermore, we have shown that an exponential scaling of the QFI, ${\cal F}_\omega(T) \propto e^{\Gamma (\omega T)}$, can be achieved by increasing the winding number with total evolution time and have derived the fundamental bound of the exponent $\Gamma$.

We have also developed an explicit protocol to achieve optimal scaling of QFI. This protocol can be implemented by simple on-off scheduling of $\epsilon(t)$, thereby providing an experimentally feasible approach to high-precision metrology through a simple dynamic control of the quantum system. We emphasize that such dynamic control becomes even more crucial in realistic scenarios where reaching the critical point is not feasible. In such cases, as we have shown, a monotonically increasing $\epsilon(t)$ is limited to achieving only quadratic scaling of the QFI; hence, non-monotonic control, such as on-off scheduling, is essential to attain exponential scaling. We have also shown the robustness of the exponential scaling of QFI under thermal dissipation.

The theoretical framework in our work offers a potential avenue for exploring optimal control beyond critical quantum metrology, for example, enhancing squeezing~\cite{Abah2022} or charging quantum batteries~\cite{Abah2022, Downing2024}. An intriguing future research direction would be extending our formalism beyond the Gaussian limit. Another interesting question is whether the periodic control studied in this work could provide a promising route to circumvent critical slowing down in quantum many-body dynamics.

\begin{acknowledgements}
This work was supported by the National Research Foundation of Korea (NRF) grant funded by the Korea government (MSIT) (No. RS-2024-00413957 and No. RS-2024-00438415). HK is supported by the KIAS Individual Grant No. CG085302 at Korea Institute for Advanced Study. MJH is supported by the Innovation Program for Quantum Science and Technology 2021ZD0301602.
\end{acknowledgements}

\bibliography{bib.bib}

\appendix
\widetext
\section{Dynamics of squeezed states and the quantum Fisher information}\label{section:dynamic}

\subsection{Dynamics of fully connected system}
In this section, we derive the equations of motion under the following effective time-dependent Hamiltonian of the fully connected system:
\begin{equation}
\label{eq:effective_H}
\h(t) = \omega \an^\dagger\an - \frac{\epsilon(t)}{4} (\an^\dagger+\an)^2.
\end{equation}
Let us express the quantum state at time $t$ as $\ket{\psi_t} = \U_t \ket{\psi_0}$, where the dynamics of $U_t$ is given by the Schr\"odinger equation:
\begin{equation}
\label{eq:schrodinger}
\frac{d \U_t}{dt} = -i \hat{H}(t) \U_t.
\end{equation}
By noting that Eq.~\eqref{eq:effective_H} only contains quadratic terms in $\hat{a}$ and $\hat{a}^\dagger$, the evolution unitary $U_t$ can be fully characterized by the following Gaussian unitary operations without displacement:
\begin{equation}
\label{eq:unitary_form}
	\U_t = e^{i \chi} \hat{R}\left(\frac{\varphi}{2}\right) \hat{S}\left( r \right)\hat{R}\left(\frac{\theta}{2}\right),
\end{equation}
where $\hat R(\phi) = e^{-i \phi \an^\dagger\an}$ is the rotation operator and $\hat{S}(r) = e^{\frac{r}{2}(\an^2 - \an^{\dagger 2})}$ is the squeezing operator with a real-valued squeezing $r$. We note that when $\hat{U}_t$ is acting on the vacuum state $\ket{\psi_0} = \ket{0}$, the resulting state becomes a squeezed vacuum state, 
\begin{equation}
\ket{\psi_t} = \hat{U}_t \ket{0} = \hat{S}(\xi) \ket{0},
\end{equation}
where $\hat{S}(\xi) = e^{\frac{1}{2}(\xi^* \an^2 - \xi \an^{\dagger 2})}$ with $\xi = r e^{-i\varphi}$. We note that the state is independent of $\theta$ and the overall phase $\chi$ does not have any physical significance.

From Eq.~\eqref{eq:schrodinger}, we obtain
\begin{equation}
\begin{aligned}
\frac{d \U_t}{dt} &= 
\frac{d e^{i \chi}}{dt} \hat{R}\left(\frac{\varphi}{2}\right) \hat{S}\left( r \right)\hat{R}\left(\frac{\theta}{2}\right) + e^{i\chi} \left[
\left(\frac{d \hat{R}\left(\frac{\varphi}{2}\right)}{dt} \right) \hat{S}\left( r \right)\hat{R}\left(\frac{\theta}{2}\right) + \hat{R}\left(\frac{\varphi}{2}\right) \left( \frac{d \hat{S}\left( r \right)}{dt} \right) \hat{R}\left(\frac{\theta}{2}\right) + \hat{R}\left(\frac{\varphi}{2}\right) \hat{S}\left( r \right) \left( \frac{d \hat{R}\left(\frac{\theta}{2}\right)}{dt} \right) \right] \\
&=-\frac{i}{2} e^{i\chi} \left[  -2\dot \chi + \dot\varphi (\hat{a}^\dagger \hat{a}) \hat{R}\left(\frac{\varphi}{2}\right) \hat{S}\left( r \right)\hat{R}\left(\frac{\theta}{2}\right) + i \dot r \hat{R}\left(\frac{\varphi}{2}\right) (\hat{a}^2 - \hat{a}^{\dagger 2}) \hat{S}\left( r \right)\hat{R}\left(\frac{\theta}{2}\right) + \dot \theta  \hat{R}\left(\frac{\varphi}{2}\right) \hat{S}\left( r \right)\hat{R}\left(\frac{\theta}{2}\right) (\hat{a}^\dagger\hat{a}) \right]\\
&=-\frac{i}{2} e^{i\chi} \left[  -2\dot \chi + \dot\varphi (\hat{a}^\dagger \hat{a})  + i \dot r \hat{R}\left(\frac{\varphi}{2}\right) (\hat{a}^2 - \hat{a}^{\dagger 2}) \hat{R}^\dagger \left(\frac{\varphi}{2}\right) + \dot \theta  \hat{R}\left(\frac{\varphi}{2}\right) \hat{S}\left( r \right) (\hat{a}^\dagger\hat{a}) \hat{S}^\dagger \left( r \right) \hat{R}^\dagger\left(\frac{\varphi}{2}\right) \right] \hat{R}\left(\frac{\varphi}{2}\right) \hat{S}\left( r \right)\hat{R}\left(\frac{\theta}{2}\right)\\
&=-\frac{i}{2} e^{i\chi} \left[  -2\dot \chi + \dot\varphi (\hat{a}^\dagger \hat{a})  +  i\dot r \hat{R}\left(\frac{\varphi}{2}\right) (\hat{a}^2 - \hat{a}^{\dagger 2}) \hat{R}^\dagger \left(\frac{\varphi}{2}\right) + \dot \theta  \hat{R}\left(\frac{\varphi}{2}\right) \hat{S}\left( r \right) (\hat{a}\hat{a}^\dagger) \hat{S}^\dagger \left( r \right) \hat{R}^\dagger\left(\frac{\varphi}{2}\right) \right] \U_t \\
&= (-i) \hat{H}(t) \U_t.
\end{aligned}
\end{equation}
This leads to the following expression:
\begin{equation}
\label{eq:one_line_eq}
\frac{1}{2} \left[  -2\dot \chi + \dot\varphi (\hat{a}^\dagger\hat{a})  + i \dot r \hat{R}\left(\frac{\varphi}{2}\right) (\hat{a}^2 - \hat{a}^{\dagger 2}) \hat{R}^\dagger \left(\frac{\varphi}{2}\right) + \dot \theta  \hat{R}\left(\frac{\varphi}{2}\right) \hat{S}\left( r \right) (\hat{a}^\dagger\hat{a}) \hat{S}^\dagger \left( r \right) \hat{R}^\dagger\left(\frac{\varphi}{2}\right) \right]  = \hat{H}(t) = \omega \an^\dagger\an - \frac{\epsilon(t)}{4} (\an^\dagger+\an)^2.
\end{equation}
From the action of the rotation and squeezing operations to the bosonic operators,
\begin{eqnarray}
    \hat{R}\left(\frac{\varphi}{2}\right)\an\hat{R}^\dagger\left(\frac{\varphi}{2}\right)&=&e^{i\frac{\varphi}{2}}\an,\\
    \hat{S}\left(r \right)\an \hat{S}^\dagger\left(r\right)&=&\cosh\left(r\right)\an + \sinh\left(r\right)\an^\dagger,
\end{eqnarray}
Eq.~\eqref{eq:one_line_eq} can be simplified as
\begin{equation}
\frac{1}{2} \left[  -2\dot \chi + \dot\varphi (\hat{a}^\dagger\hat{a})  + i \dot r (e^{i\varphi} \hat{a}^2 - e^{-i\varphi} \hat{a}^{\dagger 2})  + \dot \theta  \left( \cosh(2r) \hat{a}^\dagger \hat{a} + \frac{e^{i\varphi} \hat{a}^2 + e^{-i\varphi} \hat{a}^{\dagger 2}} {2}\sinh(2r)+ \sinh^2(r) \mathbb{I} \right) \right] = \omega \an^\dagger\an - \frac{\epsilon(t)}{4} (\an^\dagger+\an)^2.
\end{equation}

By collecting each order of $\mathbb{I}$, $\hat{a}^\dagger \hat{a}$, $\hat{a}^2$, and $\hat{a}^{\dagger 2}$, we obtain
\begin{align}
- 2\dot \chi + \dot \theta \sinh^2(r) &= - \frac{\epsilon(t)}{2}, \\
\dot \varphi + \dot \theta \cosh(2r) &= 2\omega - \epsilon(t), \\
\label{eq:d1}
e^{i\varphi} \left[ i \dot r + \frac{\dot \theta}{2} \sinh(2r) \right] &= -\frac{\epsilon(t)}{2}, \\
\label{eq:d2}
e^{-i\varphi} \left[ -i \dot r + \frac{\dot \theta}{2} \sinh(2r) \right] &= -\frac{\epsilon(t)}{2} .
\end{align}
From Eqs.~\eqref{eq:d1} and \eqref{eq:d2}, we obtain
\begin{equation}
\label{eq:Eq_theta}
	\dot \theta = -\frac{\epsilon(t) \cos\varphi}{\sinh(2r)}.
\end{equation}
Finally, substituting the above expressions and rearranging the equations leads to the desired equation of motion,
\begin{equation}
\label{eq:Eq_M}
\begin{aligned}
	\dot r  &= \frac{\epsilon(t)}{2} \sin\varphi, \\
	\dot \varphi  &= 2\omega - \epsilon(t) (1 - \coth(2r) \cos\varphi),\\
	\dot \chi  &= \frac{\epsilon(t)}{4} \left( 1 - \frac{ 2 \sinh^2(r) \cos\varphi}{\sinh(2r)} \right),
\end{aligned}
\end{equation}
where we note again that the overall phase $\chi$ does not have any physical significance.

\subsection{Quantum Fisher information for dynamical encoding}
We provide the closed form of the quantum Fisher information (QFI) when estimating the system parameter $\omega = \omega_0 + \delta \omega$ encoded in the evolved state after time $T$, $\ket{\psi_T} = \hat{U}_T \ket{\psi_T}$. The explicit form of the QFI is given as
\begin{equation}
\mathcal{F}_\omega = 4(\mean{\partial_\omega \psi_T|\partial_\omega \psi_T}-|\mean{ \psi_T|\partial_\omega \psi_T}|^2)\at[]{\omega =\omega_0},
\end{equation}
where $\ket{\partial_\omega \psi_T} = \frac{\partial \ket{\psi_T}}{\partial \omega}  = \frac{\partial \hat{U}_T}{\partial \omega}  \ket{\psi_0}$.
We note that by introducing the following hermitian operator,
\begin{equation}
\hat{G}_T =  i \hat{U}_T^\dagger \left( \frac{\partial \hat{U}_T}{ \partial \omega} \right) \at[]{\omega =\omega_0},
\end{equation}
the QFI can be expressed as the expectation value over the initial state $\ket{\psi_0}$ as
\begin{equation}
\begin{aligned}
\mathcal{F}_\omega 
&= 4\left( \bra{\psi_0} \left( \frac{\partial \hat{U}_T^\dagger}{ \partial \omega} \right)   \left( \frac{\partial \hat{U}_T}{ \partial \omega} \right)  \ket{\psi_0} - \left|\bra{\psi_0} i \hat{U}_T^\dagger \left( \frac{\partial \hat{U}_T}{ \partial \omega} \right) \ket{\psi_0}\right|^2 \right) \at[]{\omega =\omega_0}\\
&= 4\left( \bra{\psi_0} \left[(-i) \left( \frac{\partial \hat{U}_T^\dagger}{ \partial \omega} \right) \hat{U}_T \right] \left[ i \hat{U}_T^\dagger \left( \frac{\partial \hat{U}_T}{ \partial \omega} \right) \right] \ket{\psi_0} - \left|\bra{\psi_0} i \hat{U}_T^\dagger \left( \frac{\partial \hat{U}_T}{ \partial \omega} \right) \ket{\psi_0}\right|^2 \right) \at[]{\omega =\omega_0}\\
&= 4( \langle \psi_0 | \hat{G}_T^2 | \psi_0 \rangle -  \langle \psi_0 | \hat{G}_T | \psi_0 \rangle^2  \\
&= 4 {\rm Var}_{\ket{\psi_0}} (\hat{G}_T),
\end{aligned}
\end{equation}
where $\hat{G}_T = \hat{G}_T^\dagger = (-i) \left( \frac{\partial \hat{U}_T^\dagger}{ \partial \omega} \right) \hat{U}_T \at[]{\omega =\omega_0}$ is a hermitian operator from the fact that $\hat{U}_T^\dagger \hat{U}_T = \mathbb{I} \Leftrightarrow \left( \frac{\partial \hat{U}_T^\dagger}{ \partial \omega} \right) \hat{U}_T + \hat{U}_T^\dagger \left( \frac{\partial \hat{U}_T}{ \partial \omega} \right)  =0$. This is consistent with the conventional form of the QFI, given by the variance of the generator.

Now, we evaluate the explicit form of $\hat{G}_T$ for general time-dependent dynamics with 
\begin{equation}
\hat{U}_{t \rightarrow t'} = \mathcal{T} e^ {-i\int_t^{t'} dt\hat{H}(t)  },
\end{equation}
where $\hat{U}_T = \hat{U}_{0 \rightarrow T}$ is the special case with $t=0$ and $t'=T$. We note that
\begin{equation}
\begin{aligned}
\hat{G}_T &=  i \hat{U}_T^\dagger \left( \frac{\partial \hat{U}_T}{\partial \omega}\at[]{\omega =\omega_0} \right) \\
&= i \hat{U}_T^\dagger \int _0 ^ T d\tau\mathcal{T}e^{-i \int_\tau^T  \hat{H}(t) dt}(-i) \left( \frac{\partial \hat{H}(\tau)}{\partial \omega} \right) \at[]{\omega =\omega_0} \mathcal{T}e^{-i \int_0^\tau \hat{H}(t) dt}\\
&= \int _0^T d\tau \hat{U}_{0 \rightarrow T}^\dagger \hat{U}_{\tau \rightarrow T} \left( \frac{\partial \hat{H}(\tau)}{\partial \omega} \right) \at[]{\omega =\omega_0} \hat{U}_{0 \rightarrow \tau} \\
&= \int _0^T d\tau \hat{U}_{0 \rightarrow \tau}^\dagger \left( \frac{\partial \hat{H}(\tau)}{\partial \omega} \right) \at[]{\omega =\omega_0} \hat{U}_{0 \rightarrow \tau}.
\end{aligned}
\end{equation}

For the Hamiltonian in Eq.~\eqref{eq:effective_H}, we have
\begin{equation}
\frac{\partial \hat{H}(t)}{\partial \omega} = \an^\dagger\an,
\end{equation}
which is time-independent. We also note that the time-evolution operator given by Eq.~\eqref{eq:unitary_form} reads
\begin{equation}
\U_{0 \rightarrow \tau} = e^{i \chi(\tau)} \hat{R}\left(\frac{\varphi(\tau)}{2}\right) \hat{S}\left( r(\tau) \right)\hat{R}\left(\frac{\theta(\tau)}{2}\right),
\end{equation}
at time $\tau$. From this, we obtain
\begin{equation}
\begin{aligned}
\hat{G}_T &= \int _0^T d\tau \hat{U}_{0 \rightarrow \tau}^\dagger (\an^\dagger\an) \hat{U}_{0 \rightarrow \tau} \\
    &= \int _0 ^ T d\tau \hat{R}^\dagger\left(\frac{\theta(\tau)}{2}\right) \hat{S}^\dagger\left(\frac{r(\tau)}{2}\right)\hat{R}^\dagger\left(\frac{\varphi(\tau)}{2}\right)\an^\dagger\an\hat{R}\left(\frac{\varphi(\tau)}{2}\right) \hat{S}\left(\frac{r(\tau)}{2}\right)\hat{R}\left(\frac{\theta(\tau)}{2}\right)\\
    &=\int_0^T d\tau \left(\cosh 2r(\tau)\an^\dagger\an - \frac{e^{-i\theta(\tau)}\an^2+e^{i\theta(\tau)}\an^{\dagger 2}}{2}\sinh 2r(\tau) \right).
\end{aligned}
\end{equation}
By taking the initial state $\ket{\psi_0} = \ket{0}$, we note that
\begin{equation}
\hat{G}_T \ket{0} = - \int_0^T d\tau e^{i\theta(\tau)} \sinh(2r(\tau)) \left( \frac{\an^{\dagger 2}}{2} \right) \ket{0},
\end{equation}
which leads to
\begin{equation}
\begin{aligned}
\bra{0} \hat{G}_T \ket{0} &= 0, \\
\bra{0} \hat{G}_T^2 \ket{0} &= \int_0^T d\tau \sinh(2r(\tau)) e^{i\theta(\tau)} \int_0^T d\tau'  \sinh(2r(\tau')) e^{-i\theta(\tau')} \bra{0} \left( \frac{\an^2 \an^{\dagger 2}}{4} \right) \ket{0} \\
&= \frac{1}{2} \left | \int_0^T d\tau \sinh(2r(\tau)) e^{i\theta(\tau)}  \right|^2.
\end{aligned}
\end{equation}
Finally, we obtain the closed form of the QFI,
\begin{equation}
\label{eq:closed_form_qfi}
    \mathcal{F}_\omega = 4( \langle \psi_0 | \hat{G}_T^2 | \psi_0 \rangle -  \langle \psi_0 | \hat{G}_T | \psi_0 \rangle^2  ) = 2\left | \int_0^T d\tau \sinh(2r(\tau)) e^{i\theta(\tau)}  \right|^2,
\end{equation}
which corresponds to Eq.~(6) in the main text.

\newpage

\section{Proof of Theorems with finite winding number}\label{section:fintinte_winding}

\subsection{Proof of Theorem 1}
In this section, we provide the detailed proof of Theorem~1:
\addtocounter{theorem}{-5}
\begin{theorem}
\label{thm:thm1}
The scaling of QFI with a fixed winding number $n$ is bounded by 
\[
\mathcal{F}_\omega \leq \frac{1}{\omega^2}(\omega T)^{4n+6}+o(T^{4n+6})
\]
where $o(T^{4n+6})$ grows much slower than $T^{4n+6}$.
\end{theorem}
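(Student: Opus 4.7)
The plan is to reduce the QFI bound to an upper bound on the maximum squeezing $r_{\max} = \max_{t \in [0,T]} r(t)$, and then bound $r_{\max}$ using the winding-number and time-budget constraints. Starting from the closed form Eq.~\eqref{eq:QFI_dynamical_encoding}, the triangle inequality gives
\begin{equation}
\mathcal{F}_\omega \leq 2\left(\int_0^T \sinh 2r(t)\, dt\right)^2 \leq 2 T^2 \sinh^2(2 r_{\max}).
\end{equation}
If I can show $r_{\max} \leq (n+1)\ln(\omega T) + O(1)$ in the long-time regime, then $\mathcal{F}_\omega \leq 2 T^2 \cdot \tfrac{1}{4}(\omega T)^{4(n+1)} e^{O(1)} = O(\omega^{4n+4} T^{4n+6})$, which matches the stated bound up to the $o(T^{4n+6})$ correction.

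To bound $r_{\max}$, I would work in the large-$r$ limit of Eq.~\eqref{eq:EOM_large_r}, where $\dot\varphi = 2\omega - 2\epsilon \sin^2(\varphi/2) \geq 2\omega\cos^2(\varphi/2) \geq 0$ for $\epsilon \leq \omega$. Hence $\varphi$ is monotonically nondecreasing, and $r$ grows only in the $n+1$ productive half-intervals $\varphi \in (2k\pi,(2k+1)\pi)$, $k = 0,\ldots,n$. Changing variables from $t$ to $\varphi$ via Eq.~\eqref{eq:integral_without_epsilon} and invoking the pointwise admissibility constraint $\tfrac{1}{2} \leq \omega/\dot\varphi \leq \tfrac{1}{2}\sec^2(\varphi/2)$, the integrand $\left(\tfrac{2\omega}{\dot\varphi} - 1\right)\tfrac{\cot(\varphi/2)}{2}$ is bounded above pointwise by $\tfrac{1}{2}\tan(\varphi/2)$, which is integrable on $(2k\pi, (2k+1)\pi - \delta_k)$ for any cutoff $\delta_k > 0$.

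The crux is relating the cutoffs $\{\delta_k\}$ to the available time. Spending time $\tau_k$ inside the $k$-th productive half-interval at the slowest allowed $\dot\varphi = 2\omega\cos^2(\varphi/2)$ requires $\tau_k \geq \cot(\delta_k/2)/\omega$, so $\delta_k \gtrsim 1/(\omega\tau_k)$ and the squeezing gain $\Delta r_k \leq -\ln\sin(\delta_k/2) \lesssim \ln(\omega\tau_k)$. With $\sum_k \tau_k \leq T$, concavity of the logarithm (or AM--GM) yields $r_{\max} \leq \sum_{k=0}^{n} \ln(\omega\tau_k) + O(1) \leq (n+1)\ln\bigl(\omega T/(n+1)\bigr) + O(1) \leq (n+1)\ln(\omega T) + O(1)$, the required bound.

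The main obstacle is making the large-$r$ reduction rigorous: the exact equations Eq.~\eqref{eq:EOM} differ from Eq.~\eqref{eq:EOM_large_r} by terms of order $\coth 2r - 1 \sim 2e^{-4r}$, and $\dot\varphi$ can momentarily turn negative while $r$ is still $O(1)$, spoiling the monotone change of variables. I would handle this by splitting $[0,T]$ into an initial transient of duration $O(1)$, during which $r$ climbs to a threshold value and $\Phi$ changes by a bounded amount, and a subsequent large-$r$ regime where the argument above applies verbatim. The transient contributes only an additive $O(1)$ to $r_{\max}$, enters as a constant prefactor on the leading scaling, and is absorbed into $o(T^{4n+6})$. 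The $n$-dependent subleading corrections from finite-$r$ effects and from the possibly incomplete final winding (with $\Phi$ lying strictly between $2n\pi$ and $2(n+1)\pi$) are similarly absorbed into the lower-order term.
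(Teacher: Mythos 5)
Your skeleton is the same as the paper's --- bound $\mathcal{F}_\omega$ by $2\bigl(\int_0^T\sinh 2r\,dt\bigr)^2$, show the squeezing gain per winding is logarithmic in the time spent on that winding, and combine the $n+1$ windings with AM--GM to get $r\lesssim (n+1)\ln(\omega T)+O(1)$ --- but your justification of the per-winding estimate has a genuine gap. You tie the cutoff $\delta_k$ to the trajectory by asserting that reaching within $\delta_k$ of $\varphi=(2k+1)\pi$ costs time $\tau_k\geq\cot(\delta_k/2)/\omega$; that is true only for the slowest-speed protocol. Since $\dot\varphi\leq 2\omega$, an admissible protocol reaches any $\delta_k$, however small, in time at most $\pi/(2\omega)$, so the inequality fails in general, and your pointwise bound $\tan(\varphi/2)/2$ on the integrand is not integrable up to $(2k+1)\pi$, leaving the segment beyond the cutoff uncontrolled. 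The step is fixable --- choose $\delta_k=1/(\omega\tau_k)$ freely as an analysis parameter, bound the gain below the cutoff by $\int\tan(\varphi/2)/2\,d\varphi=-\ln\sin(\delta_k/2)\leq\ln(2\omega\tau_k)$ with no time input, and bound the gain above the cutoff by the rate $\dot r=\epsilon\sin\varphi/2\leq\omega\sin(\delta_k/2)\leq 1/(2\tau_k)$ times the dwell time $\tau_k$, i.e.\ an additive $O(1)$ --- but as written the argument only analyzes the extremal control rather than bounding all controls.

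The second gap is the one you flag yourself, and your proposed repair does not work: nothing forces $r$ to exceed a threshold within an "initial transient of duration $O(1)$" (take $\epsilon\equiv 0$), $r$ can return to small values on every downswing $\sin\varphi<0$, and in the exact Eq.~\eqref{eq:EOM} $\dot\varphi$ can turn negative whenever $\cos\varphi<0$ and $\coth 2r$ is large --- at any time, not only early --- so the monotone change of variables $t\mapsto\varphi$ and even the uniqueness of the crossing times $\varphi(t_k)=2k\pi$ require an argument. This is precisely why the paper's proof never uses the large-$r$ equations: Lemma~\ref{lemma:bound_of_r} works with the exact dynamics in the coordinates $x=\sinh 2r\cos\varphi$, $y=\sinh 2r\sin\varphi$, where $x+\sqrt{1+x^2+y^2}$ is non-increasing on the upper half-plane; this gives $y(t_k+\delta)\leq\omega\delta\,e^{2r(t_k)}$ and hence $\cosh 2r(t_k+\Delta_k)\leq(\omega^2\Delta_k^2+1)\cosh 2r(t_k)$ exactly, with no approximation and no change of variables. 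To complete your route you would have to either redo the per-winding estimate on the exact equations in this spirit, or rigorously control the error terms of order $e^{-4r}$ together with the non-monotonicity of $\varphi$ at small $r$; neither is a routine absorption into $o(T^{4n+6})$.
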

\begin{proof}
By using Eq.~\eqref{eq:closed_form_qfi}, the upper bound of QFI is given by
    \[\begin{split}\label{eq:QFI_element_bound}
        \mathcal{F}_\omega &= 2 \left|\int_0^T d\tau \sinh 2r (\tau) e^{i\theta(\tau)}\right|^2 \leq  2 \left(\int_0^T d\tau \sinh 2r (\tau)\right)^2,
    \end{split}
    \]
where the right side only depends on the squeezing parameter. Hence, the maximum scaling of QFI is bounded by the maximum scaling of $r$.

In order to show the bound for the fixed winding number, we first show the upper bound of squeezing without increasing the winding number, summarized as the following Lemma:
\begin{lemma}\label{lemma:bound_of_r}
    Suppose that $\varphi(t_k) = 2 k \pi$ for any non-negative integer $k$. Then, the squeezing after time $\Delta_k$ without increasing the winding number (i.e., $\varphi(t_k + \Delta_k) < 2 (k+1) \pi$) is upper bounded as
    \begin{equation}
    \label{eq:supp_Lemma1_bound}
	    \cosh (2r(t_k + \Delta_k)) \leq ( \omega^2 \Delta_k^2 +1) \cosh (2r(t_k)) .
    \end{equation}
\end{lemma}
The proof of Lemma~\ref{lemma:bound_of_r} is provided at the end of the section. 

We provide a proof of Theorem~1 by using Lemma~\ref{lemma:bound_of_r}. For a fixed winding number $n$, let us define $t_k$ as the time points satisfying $\varphi(t_k) = 2 k \pi$ for each $k = 0, 1, \cdots, n-1, n$. As there is a region with non-decreasing $\varphi$, i.e., $\dot\varphi(t) > 0$ when $| \cos\varphi | > \tanh(2r)$, the winding number cannot decrease. This implies that there is a single $t_k$ that satisfies $\varphi(t_k) = 2 k \pi$ for each $k$. Now let us set the time duration $\Delta_k = t_{k+1} - t_k$, during which the winding number remains at $k$. We note that $t_0 = 0$ so that $\cosh(r(t_0)) = 1$ and $\tau = \sum_{k=0}^n \Delta_k$. Then, from the bound from Lemma~\ref{lemma:bound_of_r}, for each time interval $\Delta_k$, the squeezing is bounded as $\cosh 2r(t_k + \Delta_k) \leq (\omega^2 \Delta_k^2 +1) \cosh 2r(t_k) $. By repeating this step for each $k = 0, 1, \cdots, n$, we obtain
\begin{equation}
\begin{aligned}
    \cosh 2r (\tau)  \leq \prod_{k=0}^{n} (\omega^2 \Delta_k^2+1) 
 \leq \left[ \sum_{k=0}^n \frac{\omega^2 \Delta_k^2+1}{n+1} \right]^{n+1} 
 = \left[ \frac{\omega^2}{n+1} \left( \sum_{k=0}^n \Delta_k^2 \right) + 1\right]^{n+1} 
 \leq \left[ \frac{\omega^2}{n+1} \left( \sum_{k=0}^n \Delta_k \right)^2 + 1\right]^{n+1} 
 = \left( \frac{\omega^2 \tau^2}{n+1} + 1\right)^{n+1},
\end{aligned}
\end{equation}
where the second inequality comes from the inequality of arithmetic and geometric means and the third inequality comes from $\sum_i x_i^2 \leq (\sum_i x_i)^2$ for $x_i \geq 0$.

By noting that $\sinh x \leq \cosh x $, we have the following bound of the squeezing parameter for a fixed winding number $n$ and the total evolution time $\tau$,
\begin{equation}
	\sinh 2r(\tau) \leq \left( \frac{\omega^2 \tau^2}{n+1} + 1\right)^{n+1}.
\end{equation}

Consequently, the QFI can be bounded as 
\begin{equation}
        \mathcal{F}_\omega \leq  2 \left(\int_0^T d\tau \sinh 2r (\tau)\right)^2 \leq 2 \left(\int_0^T d\tau   \left( \frac{\omega^2\tau^2}{n+1}+1 \right)^{n+1}\right)^2 = 2 T^2 \left[_2F_1\left(\frac{1}{2}, -(n+1), \frac{3}{2}, -\frac{\omega^2 T^2}{n+1}\right)\right]^2,
\end{equation}
where $_2F_1$ is the hypergeometric function. When we focus only on the leading order term in $T$, we have
\begin{equation}
    \mathcal{F}_\omega \leq \frac{2T^4}{(2n+3)^2} \left(\frac{\omega^2T^2}{n+1}\right)^{2n+1} + o(T^{4n+6}) = c_n T^{4n+6} + o(T^{4n+6}),
\end{equation}
with $c_n = \frac{2 \omega^{4n+2}}{(2n+3)^2 (n+1)^{2n+1}}$, which completes the proof.
\end{proof}

\begin{figure}[b]
\begin{center}
\includegraphics[width=.9\linewidth]{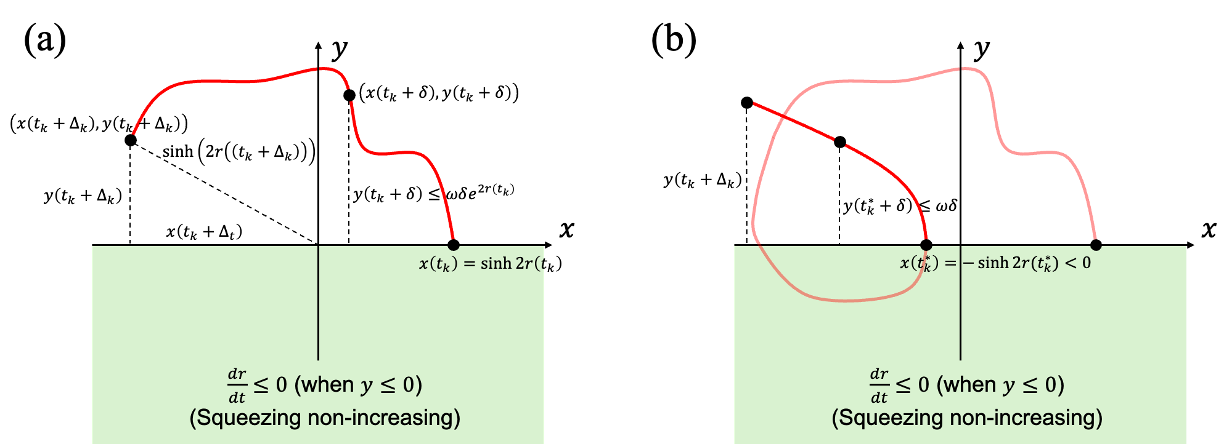}
\caption{Schematic figure of obtaining the bound of the squeezing parameter from time $t_k$ to $t_k + \Delta_k$ without increasing the winding number.}
\label{Fig:supp_bound}
\end{center}
\end{figure}
Now, we provide the proof of Lemma~\ref{lemma:bound_of_r} as follows:
\begin{proof}
From the equation of motion in Eq.~\eqref{eq:Eq_M}, the squeezing parameter $r$ is monotonically decreasing when $\sin\varphi \leq 0$. Therefore, we only need to consider the case with $\sin\varphi \geq 0$. To this end, we parameterize $r$ and $\varphi$ as
\begin{eqnarray}
    x(t) &= \sinh 2r(t) \cos \varphi(t),  \\
    y(t) &= \sinh 2r(t) \sin \varphi(t),
\end{eqnarray}
and focusing on the case with $y(t) \geq 0$. The equation of motion is then rewritten as
\begin{equation}
\begin{aligned}
\label{eq:EOM_xy}
    \dot x &= -(2\omega -\epsilon(t))y,  \\
    \dot y &= (2\omega -\epsilon(t))x +\epsilon(t)\sqrt{1+x^2+y^2}.
\end{aligned}
\end{equation}
We note that at time $t_k$, $y(t_k) = 0$ as $\varphi(t_k) = 2 k \pi$. Consequently, $x(t_k) = \sinh(2r(t_k))$ and $\cosh(2r(t_k)) = \sqrt{x^2(t_k) +1}$. We first show that for any time $\delta$ such that $\varphi(t_k + \delta) < 2 (k+1) \pi$ (see Fig.~\ref{Fig:supp_bound}),
\begin{equation}
\label{eq:supp_y_bound}
y(t_k + \delta) \leq (\omega \delta) e^{2r(t_k)}.
\end{equation}
This can be shown as follows:
\begin{equation}
\begin{aligned}
	\dot y = 2\omega x + \epsilon(t) \left(\sqrt{1+x^2+y^2} -x\right)\leq 2\omega x + \omega \left(\sqrt{1+x^2+y^2} -x\right)= \omega \left( x + \sqrt{1+x^2+y^2} \right),
\end{aligned}
\end{equation}
where $\sqrt{1+x^2+y^2} -x \geq 0$ and $\epsilon \leq \omega$. We then note that $x + \sqrt{1+x^2+y^2} $ is always non-increasing for $y \geq 0$, since
\begin{equation}
    \frac{d}{dt}\left[ x + \sqrt{x^2+y^2+1} \right]= - 2(\omega - \epsilon(t)) y \leq 0.
\end{equation}

Now we have two cases with the trajectories with $y(t)>0$:

(Case 1): The trajectory starts with $x(t_k) = \sinh 2r(t_k)$ and $y = 0$ (see Fig.~\ref{Fig:supp_bound} (a)). In this case, $x + \sqrt{1+x^2+y^2} \leq x(t_k) + \sqrt{1+x^2(t_k) + y^2(t_k)} = \sinh(2r(t_k))+ \sqrt{1+\sinh^2(2r(t_k))} = \sinh(2r(t_k))+ \cosh(2r(t_k)) = e^{2r(t_k)}$. Hence, we have
\begin{equation}
	\dot y \leq \omega e^{2r(t_k)} \Rightarrow y(t_k + \delta) \leq (\omega \delta)e^{2r(t_k)}.
\end{equation}

(Case 2): The trajectory starts with $x = -\sinh 2r(t_k^*) < 0$ and $y=0$ (see Fig.~\ref{Fig:supp_bound} (b)). In this case, $x + \sqrt{1+x^2+y^2} \leq x(t_k^*) + \sqrt{1+x^2(t_k^*) + y^2(t_k^*)} = -\sinh(2r(t_k^*))+ \sqrt{1+\sinh^2(2r(t_k^*))} = -\sinh(2r(t_k^*))+ \cosh(2r(t_k^*)) = e^{-2r(t_k^*)}$. Hence, we have
\begin{equation}
	\dot y \leq \omega e^{-2r(t_k^*)} \Rightarrow y(t_k^* + \delta) \leq (\omega \delta)e^{-2r(t_k^*)}.
\end{equation}

Also, from Eq.~\eqref{eq:EOM_xy}, we obtain
\begin{equation}
\frac{d \cosh(r)}{dt} = \frac{d}{dt} \sqrt{x^2 + y^2 +1} = \epsilon(t) y \leq \omega y,
\end{equation}
since $y \geq 0$ and $\epsilon(t) \leq \omega$. By combining with Eq.~\eqref{eq:supp_y_bound}, this leads to 
\begin{equation}
\begin{aligned}
\cosh(r(t_k + \Delta_k)) - \cosh(r(t_k)) & \leq \int_0^{\Delta_k} \omega y(t_k + \delta) d\delta  \leq  \int_0^{\Delta_k}  (\omega^2 \delta) e^{2r(t_k)} d\delta = \frac{\omega^2 \Delta_k^2}{2}e^{2r(t_k)} \leq \omega^2 \Delta_k^2 \cosh(2r(t_k)) \\
\cosh(r(t_k^* + \Delta_k)) - \cosh(r(t_k^*)) & \leq \int_0^{\Delta_k} \omega y(t_k + \delta) d\delta  \leq  \int_0^{\Delta_k}  (\omega^2 \delta) e^{2r(t_k^*)} d\delta = \frac{\omega^2 \Delta_k^2}{2}e^{-2r(t_k^*)} \leq \omega^2 \Delta_k^2 \cosh(2r(t_k)),
\end{aligned}
\end{equation}
where the last inequality comes from $e^x \leq 2 \cosh x $ and $e^{-y} \leq 1 \leq \cosh x$ for any $x,y \geq 0$. This implies that the squeezing parameter $\cosh(2r)$ cannot increase more than $\omega^2 \Delta_k^2 \cosh(2r(t_k))$ from time $t_k$ to $t_k + \Delta_k$ in any case, which completes the proof of Lemma~\ref{lemma:bound_of_r}.
\end{proof}

\subsection{Proof of Theorem 2}
We provide a proof of Theorem~2 in the main text:
    \begin{theorem}\label{theorem:mono_increasing_g}
    For any control with monotonically increasing $\epsilon(t)$, the QFI scaling cannot exceed $T^{6}$ as the winding number remains zero.
\end{theorem}
\begin{proof}
We prove by contradiction that the winding number remains zero for any monotonically increasing $\epsilon(t)$. Let us assume that the trajectory of a monotonically increasing $\epsilon(t)$ can have winding number $n\geq 1$.
\begin{figure}[b]
\begin{center}
\includegraphics[width=.4\linewidth]{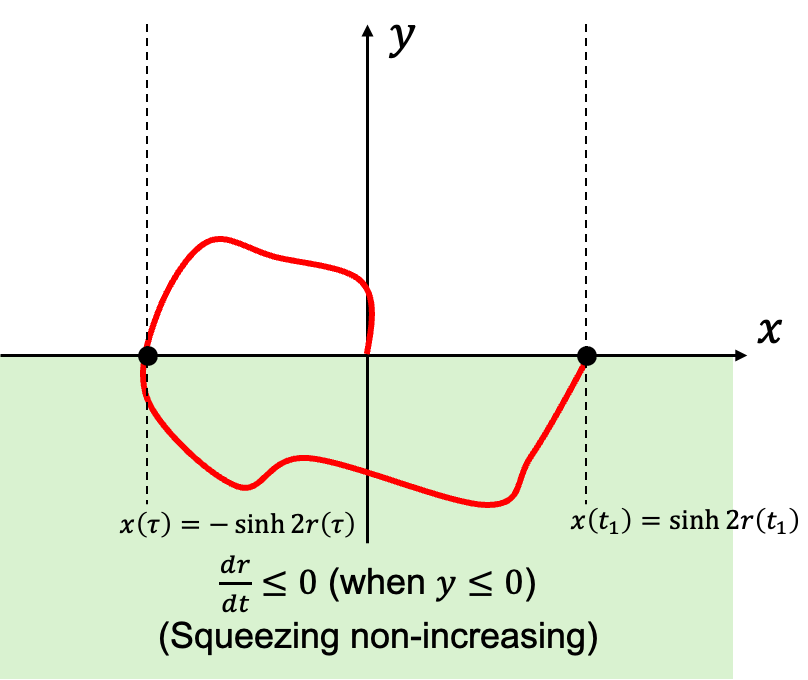}
\caption{Schematic figure of obtaining the bound of the squeezing parameter from time $t_k$ to $t_k + \Delta_k$ without increasing the winding number.}
\label{Fig:supp_bound2}
\end{center}
\end{figure}
Then, for the first winding occurring at time $t_1$, i.e., $(x(t_1), y(t_1)) = (\sinh(2r(t_1)), 0)$, there exists a time point $\tau < t_1$ such that $(x(\tau), y(\tau)) = (-\sinh(2r(\tau)), 0)$ and $y(t) \leq 0$ for any $\tau \leq t \leq t_1$ (see Fig.~\ref{Fig:supp_bound2}). 

Now, let us consider the quantity,
    \[
    C(t) =(2\omega-\epsilon(\tau))\sqrt{x^2(t)+y^2(t)+1}+\epsilon(\tau)x(t).
    \]
By taking the time derivative, we obtain
    \[\label{eq:C_after_t0}
    \frac{dC(t)}{dt} = 2(\epsilon(t) - \epsilon(t_0))\omega y(t) \leq 0 ~\quad{\rm for~} \tau \leq t \leq t_1,
    \]
since $\epsilon$ is monotonically increasing, i.e., $\epsilon(t) - \epsilon(t_0) \geq 0$ and $y(t) \leq 0$ for any $\tau \leq t \leq t_1$. This implies that $C(t)$ is monotone decreasing for $\tau \leq t \leq t_1$, which leads to 
    \[\label{eq:bound_C_t0}
    C(\tau)\geq C(t_1).
    \]

On the other hand, one can directly evaluate $C(t)$ at time $\tau$ and $t_1$ as,
    \begin{equation}
    \begin{aligned}
            C(\tau) &= (2\omega - \epsilon(\tau)) \sqrt{x^2(\tau) +1} + \epsilon(\tau) x(\tau) \\
            &= (2\omega - \epsilon(\tau)) \cosh(2r(\tau)) - \epsilon(\tau) \sinh(2r(\tau)) \\
            &= 2\omega \cosh(2r(\tau)) - \epsilon(\tau) e^{2r(\tau)},
    \end{aligned}
    \end{equation}
and
    \begin{equation}
    \begin{aligned}
            C(t_1) &= (2\omega - \epsilon(\tau)) \sqrt{x^2(t_1) +1} + \epsilon(\tau) x(t_1) \\
            &= (2\omega - \epsilon(\tau)) \cosh(2r(t_1)) + \epsilon(\tau) \sinh(2r(t_1)) \\
            &= 2\omega \cosh(2r(t_1)) - \epsilon(\tau) e^{-2r(t_1)}.
    \end{aligned}
    \end{equation}
We then obtain
    \begin{equation}
    \label{eq:bound_C_other}
    \begin{aligned}
        C(\tau) - C(t_1) &= 2\omega(\cosh(2r(\tau)) - \cosh(2r(t_1)) - \epsilon(\tau) \left(e^{2r(\tau)} - e^{-2r(t_1)} \right)\\
        &= 2\omega(\cosh(2r(\tau)) - \cosh(2r(t_1)) - \epsilon(\tau) \left(e^{2r(\tau)} + e^{-2r(\tau)} - e^{-2r(\tau)} - e^{-2r(t_1)} \right) \\
        &\leq 2\omega(\cosh(2r(\tau)) - \cosh(2r(t_1)) - \epsilon(\tau) \left(e^{2r(\tau)} + e^{-2r(\tau)} - e^{2r(t_1)} - e^{-2r(t_1)} \right) \\
        & = 2\omega(\cosh(2r(\tau)) - \cosh(2r(t_1)) - 2\epsilon(\tau) (\cosh(2r(\tau)) - \cosh(2r(t_1)) \\
        & = 2 (\omega -\epsilon(\tau)) (\cosh(2r(\tau)) - \cosh(2r(t_1))\\
        & \leq 0,
    \end{aligned}
    \end{equation}
where the first inequality comes from $e^{-x} \leq 1 \leq e^y $ for any $x, y \geq 0$, and the second inequality comes from that $\epsilon(t) \leq \omega$ and $r(\tau) \geq r(t_1)$ as the squeezing cannot increase when $y(t) \leq 0$ (as we shown in the proof of Lemma~\ref{lemma:bound_of_r}).

This contradicts with Eq.~\eqref{eq:bound_C_t0} unless $C(\tau) = C(t_1)$. However, the inequality in Eq.~\eqref{eq:bound_C_other} only holds for $r(\tau) = 0 = r(t_1)$, which is a trivial case without any squeezing. As a consequence, there do not exist non-trivial time points $\tau$ and $t_1$ satisfying both Eqs.~\eqref{eq:bound_C_t0} and \eqref{eq:bound_C_other}, which completes the proof by contradiction.
\end{proof}

\subsection{Proof of Theorem 4}
In this section, we prove Theorem~4 in the main text. Let us state Theorem~4 again here.
\addtocounter{theorem}{1}
\begin{theorem}\label{theorem:fixed_n_withouT_criticality}
    For $0\leq \epsilon(t) \leq \epsilon_{\rm max} < \omega$ and a fixed winding number $n$, the squeezing $r$ is upper bounded as
    \begin{equation}
        \sinh 2r(T)  \leq  \frac{1}{(1-(\epsilon_{\rm max}/\omega))^{n+1}},
    \end{equation}
regardless of the total evolution time $T$. This implies that the QFI eventually saturates to $T^2$.
\end{theorem}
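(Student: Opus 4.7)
The plan is to adapt the ``conserved-charge'' strategy used in the proofs of Theorems~\ref{thm:thm1} and~\ref{theorem:mono_increasing_g} to the sub-critical setting, with the charge tuned to the new boundary value $\epsilon_{\rm max}$. Working in the $(x,y)$ coordinates of Eq.~\eqref{eq:EOM_xy}, I would consider
\[
C(t) = (2\omega-\epsilon_{\rm max})\sqrt{x(t)^2+y(t)^2+1}+\epsilon_{\rm max}\,x(t),
\]
i.e., the ``energy'' that would be conserved by the frozen dynamics $\epsilon\equiv\epsilon_{\rm max}$. Eq.~\eqref{eq:EOM_xy} immediately gives $\dot C = 2\omega\,(\epsilon(t)-\epsilon_{\rm max})\,y$, and since $\epsilon(t)\le\epsilon_{\rm max}$ this makes $C$ monotone non-increasing on the upper half-plane $\{y\ge 0\}$ and non-decreasing on $\{y\le 0\}$. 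This asymmetric monotonicity is the engine that will control the amplification of squeezing per winding.

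I would then evaluate $C$ at the successive phase crossings. Denoting by $R_k$ and $L_k$ the times at which $\varphi$ passes through $2k\pi$ and $(2k+1)\pi$, and by $r_R^k$, $r_L^k$ the corresponding squeezings, one obtains
\[
C(R_k)=\omega e^{2r_R^k}+(\omega-\epsilon_{\rm max})e^{-2r_R^k},\quad C(L_k)=(\omega-\epsilon_{\rm max})e^{2r_L^k}+\omega e^{-2r_L^k}.
\]
The upper-half monotonicity $C(L_k)\le C(R_k)$, multiplied through by $e^{2(r_R^k+r_L^k)}\ge 1$ and rearranged, factorizes into $(\omega-\epsilon_{\rm max})e^{2r_L^k}\le\omega e^{2r_R^k}$, i.e.
\[
e^{2r_L^k}\le \frac{e^{2r_R^k}}{1-g},\qquad g\equiv \epsilon_{\rm max}/\omega.
\]
This is the precise sense in which one amplification half-sweep multiplies the exponential squeezing by at most $(1-g)^{-1}$, reflecting the distance of $\epsilon_{\rm max}$ from the critical value $\omega$.

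To close the induction, I use that on the lower half-plane $\dot r=(\epsilon/2)\sin\varphi\le 0$, so $r_R^{k+1}\le r_L^k$, which combines with the above into the recursion $e^{2r_R^{k+1}}\le e^{2r_R^k}/(1-g)$. The vacuum initial state $r_R^0=0$ and induction then deliver $e^{2r_R^k}\le (1-g)^{-k}$ and thus $e^{2r_L^k}\le (1-g)^{-(k+1)}$. A trajectory with winding number $n$ visits at most the left crossing $L_n$, and $r(T)$ is controlled by $r_L^n$; therefore $\sinh 2r(T)\le \tfrac{1}{2}e^{2r(T)}\le (1-g)^{-(n+1)}$, which is the stated bound. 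The $T^2$ saturation of the QFI follows at once from Eq.~\eqref{eq:closed_form_qfi}: since $\sinh 2r(t)$ is uniformly bounded by $M_n\equiv (1-g)^{-(n+1)}$, one has $\mathcal{F}_\omega\le 2 M_n^2 T^2$.

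The main obstacle I anticipate is establishing the monotonicity $\dot\varphi\ge 0$ that makes the sequence of crossings $\{R_k,L_k\}$ well defined, since $\dot\varphi=2\omega-\epsilon(t)[1-\coth(2r)\cos\varphi]$ is not manifestly positive in the small-$r$ regime with $\cos\varphi\to -1$. Two options look viable: (i) show that this ill-behaved corner is confined to the pre-winding segment and does not affect the counting of crossings, using $\epsilon_{\rm max}<\omega$ to force $\dot\varphi>0$ once $r$ has grown past a small threshold; or (ii) re-define the crossings directly in terms of the integrated phase $\Phi(t)=\int_0^t\dot\varphi(t')\,dt'$ that already appears in the definition of the winding number, and repeat the inequalities as identities in $\Phi$ rather than in $\varphi$.
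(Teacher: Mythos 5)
Your core mechanism is the same as the paper's: your $C(t)$ is exactly the quantity $D(t)=(2\omega-\epsilon_{\rm max})\sqrt{x^2+y^2+1}+\epsilon_{\rm max}x$ used in the paper's Lemma~\ref{lemma:bounded_r_2}, your sign computation for $\dot C$ is right, and your factorization of $C(L_k)\le C(R_k)$ into $(\omega-\epsilon_{\rm max})e^{2r_L^k}\le\omega e^{2r_R^k}$ is a correct (and slightly sharper) way to extract the per-sweep amplification factor $(1-\epsilon_{\rm max}/\omega)^{-1}$. The $T^2$ saturation at the end is also fine.

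The genuine gap is the one you flag yourself but do not close: the chaining $R_0\to L_0\to R_1\to\cdots$ presupposes that the trajectory crosses the half-axes in that clean order, and this fails. Since $\dot\varphi=2\omega-\epsilon(1-\coth(2r)\cos\varphi)$ can be negative when $\cos\varphi<0$ and $r$ is small, the trajectory can cross the negative $x$-axis, dip into the lower half-plane, and re-enter the upper half-plane through the negative $x$-axis \emph{within the same winding}. During the dip $\dot C=2\omega(\epsilon-\epsilon_{\rm max})y\ge0$, so $C$ grows, and the inequality $C(\text{later upper-half point})\le C(R_k)$ no longer follows from monotonicity; likewise $r$ can increase again after $L_k$, breaking $r_R^{k+1}\le r_L^k$. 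Neither of your proposed repairs works: option (ii) is vacuous because $\Phi(t)=\int_0^t\dot\varphi\,dt'=\varphi(t)$, so nothing changes; option (i) fails because $r$ decreases throughout the lower half-plane and can fall below any threshold at \emph{every} winding, not only before the first one. The missing ingredient, which the paper supplies in Case~2 of Lemma~\ref{lemma:bounded_r_2}, is that at any re-entry point $t_k^*$ on the negative $x$-axis the crossing condition $\dot y(t_k^*)\ge0$ forces $-(\omega-\epsilon(t_k^*))e^{2r(t_k^*)}+\omega e^{-2r(t_k^*)}\ge0$, i.e.\ $e^{4r(t_k^*)}\le\omega/(\omega-\epsilon(t_k^*))$. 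Hence the squeezing at re-entry is automatically small and $D(t_k^*)\le 2\omega e^{-2r(t_k^*)}\le 2\omega\cosh(2r(t_k))$, which restores the bound $\cosh(2r)\le(1-\epsilon_{\rm max}/\omega)^{-1}\cosh(2r(t_k))$ on every upper-half-plane excursion of the $k$-th winding, no matter how many re-entries occur. Adding this case analysis turns your argument into the paper's proof.
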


\begin{proof}
We prove the statement, similarly to Theorem~\ref{thm:thm1}, by starting with the following lemma:
\begin{lemma}\label{lemma:bounded_r_2}
    Suppose that $\varphi(t_k) = 2 k \pi$ for any non-negative integer $k$. Then, for any protocol with $\epsilon(t) \leq \epsilon_{\rm max} < \omega$, the squeezing after time $\Delta_k$ without increasing the winding number (i.e., $\varphi(t_k + \Delta_k) < 2 (k+1) \pi$) is upper bounded as
    \begin{equation}
    \label{eq:supp_Lemma1_bound2}
	    \cosh (2r(t_k + \Delta_k)) \leq \left( \frac{1}{1-(\epsilon_{\rm max}/\omega)} \right) \cosh (2r(t_k)) .
    \end{equation}
\end{lemma}
The proof of Lemma~\ref{lemma:bounded_r_2} is provided at the end of the section. The main difference from Lemma~\ref{lemma:bound_of_r} is that the bound does not depend on $\Delta_k$. This directly leads to the proof of Theorem~\ref{theorem:fixed_n_withouT_criticality}, by taking the total evolution time $T = \sum_k \Delta_k$ and following the same logic as the proof of Theorem~\ref{thm:thm1} as,
\begin{equation}
    \sinh(2r(T)) \leq \cosh(2r(T)) \leq \left[ \prod_{k=0}^n \left( \frac{1}{1-(\epsilon_{\rm max}/\omega)} \right) \right] \cosh(2r(t_0)) = \frac{1}{\left( 1-(\epsilon_{\rm max}/\omega)\right)^{n+1}},
\end{equation}
where $\cosh(2r(t_0)) = 1$ as $r(t_0) = 0$ at the initial point $t_0 = 0$.
\begin{figure}[b]
\begin{center}
\includegraphics[width=.9\linewidth]{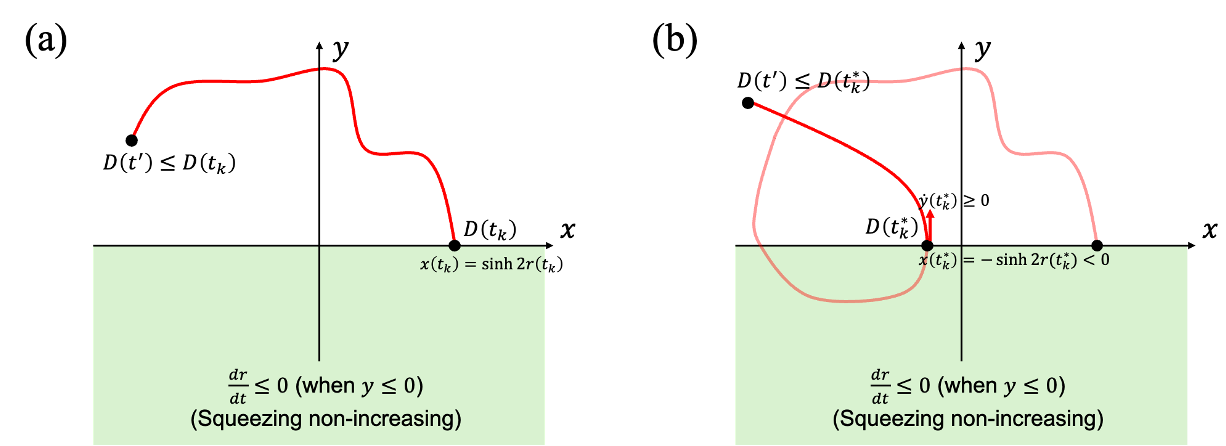}
\caption{Schematic figure for the proof of Lemma~\ref{lemma:bounded_r_2}.}
\label{Fig:supp_bound3}
\end{center}
\end{figure}

Now let us prove Lemma~\ref{lemma:bounded_r_2}. Similarly to the proof of Lemma~\ref{lemma:bound_of_r}, we only need to focus on the regime $y \geq 0$, as the squeezing is non-increasing when $y \leq 0$, or equivalently $\sin \varphi \leq 0$ (see Eq.~\eqref{eq:Eq_M}). We then define the quantity,
    \[
    D(t) = (2\omega-\epsilon_\mathrm{max})\sqrt{x^2(t)+y^2(t)+1}+\epsilon_\mathrm{max}x(t).
    \]
We note that the time derivative of $D(t)$ for $y \geq 0$ is non-positive, i.e., 
    \[\begin{split}
            \frac{dD}{dt}&=(2\omega-\epsilon_\mathrm{max})\frac{d}{dt}\sqrt{x^2+y^2+1}+\epsilon_\mathrm{max}\frac{dx}{dt}\\
            &=(2\omega-\epsilon_\mathrm{max})\epsilon(t)y-\epsilon_\mathrm{max}(2\omega-\epsilon(t))y\\
            &=2\omega(\epsilon(t)-\epsilon_\mathrm{max})y \\
            &\leq 0,
    \end{split}
    \]
as $\epsilon(t)\leq \epsilon_\mathrm{max}$. This implies that $D$ is monotonically decreasing for $y\geq 0$, which leads to
    \[
    \label{eq:D_ineq}
        D(t') \leq D(t),
    \]
for any time $t' \geq t$ as long as $y$ remains non-negative over the time interval $[t,t']$, i.e., the trajectory remains in the upper half plane (see Fig.~\ref{Fig:supp_bound3}).

Now, similarly to the proof of Lemma~\ref{lemma:bound_of_r}, we first consider the case where the trajectory starts with $x(t_k) = \sinh(2r(t_k))$ and $y(t_k)=0$ (see Fig.~\ref{Fig:supp_bound3} (a)). By direct calculation, we obtain
\begin{equation}
\begin{aligned}
    D(t_k) &= (2\omega-\epsilon_\mathrm{max})\sqrt{x^2(t_k)+y^2(t_k)+1}+\epsilon_\mathrm{max}x(t_k)\\
    &= (2\omega - \epsilon_\mathrm{max}) \cosh(2r(t_k)) + \epsilon_\mathrm{max} \sinh(2r(t_k))\\
    &= 2\omega \cosh(2r(t_k)) - \epsilon_\mathrm{max}e^{-2r(t_k)}\\
    &\leq 2\omega \cosh(2r(t_k)),
\end{aligned}
\end{equation}
where $x(t_k) = \sinh(2r(t_k))$ and $y(t_k) = 0$. 

Meanwhile, for $t_k'$, we obtain
\begin{equation}
    \begin{aligned}
        D(t_k') &= (2\omega-\epsilon_\mathrm{max})\sqrt{x^2(t_k')+y^2(t_k')+1}+\epsilon_\mathrm{max}x(t_k')\\
    &= 2(\omega-\epsilon_\mathrm{max})\sqrt{x^2(t_k')+y^2(t_k')+1}+\epsilon_\mathrm{max} \left( \sqrt{x^2(t_k')+y^2(t_k')+1} + x(t_k') \right)\\
    &\geq 2(\omega-\epsilon_\mathrm{max})\cosh(2r(t_k')),
    \end{aligned}
\end{equation}
by noting that $\sqrt{x^2(t_k')+y^2(t_k')+1} = \cosh(2r(t_k'))$ and $\sqrt{x^2(t_k')+y^2(t_k')+1} + x(t_k') \geq 0$.

Then, from Eq.~\eqref{eq:D_ineq}, we have
\begin{equation}
\begin{aligned}
    &2(\omega-\epsilon_\mathrm{max})\cosh(2r(t_k')) \leq D(t_k') \leq D(t_k) \leq 2\omega \cosh(2r(t_k))\\
    &\Rightarrow \cosh(2r(t_k')) \leq \left( \frac{1}{1-(\epsilon_{\rm max}/\omega)} \right) \cosh (2r(t_k)).
\end{aligned}
\end{equation}

For the other case, where the trajectory starts with $x(t_k^*) = -\sinh(2r(t_k^*))$ and $y(t_k^*) = 0$ (see Fig.~\ref{Fig:supp_bound3} (b)), we have $\dot y(t_k^*) \geq 0$ as it crosses the region from $y < 0$ to $y>0$. From the equation of motion (Eq.~\eqref{eq:EOM_xy}), we obtain
\begin{equation}
    \begin{aligned}
        \dot y(t_k^*) = (2\omega -\epsilon(t_k^*))x(t_k^*) +\epsilon(t)\sqrt{1+x^2(t_k^*)+y^2(t_k^*) } = -(\omega - \epsilon(t_k^*))e^{2r(t_k^*)} + \omega e^{-2r(t_k^*)} \geq 0.
    \end{aligned}
\end{equation}
Then, by noting that $\epsilon(t_k^*) \leq \epsilon_{\rm max}$, we have the following bound:
\begin{equation}
  (\omega - \epsilon_{\rm max})e^{2r(t_k^*)}  \leq (\omega - \epsilon(t_k^*))e^{2r(t_k^*)} \leq \omega e^{-2r(t_k^*)}.
\end{equation}
This leads to the bound of $D(t_k^*)$ as
\begin{equation}
\begin{aligned}
    D(t_k^*) &= (2\omega-\epsilon_\mathrm{max})\sqrt{x^2(t_k^*)+y^2(t_k^*)+1}+\epsilon_\mathrm{max}x(t_k^*)\\
    &= (2\omega - \epsilon_\mathrm{max}) \cosh(2r(t_k^*)) - \epsilon_\mathrm{max} \sinh(2r(t_k^*))\\
    &= (\omega - \epsilon_{\rm max})e^{2r(t_k^*)} + \omega e^{-2r(t_k^*)}\\
    &\leq 2 \omega e^{-2r(t_k^*)}\\
    &\leq 2\omega \cosh(2r(t_k)), 
\end{aligned}
\end{equation}
where the last inequality comes from $e^{-2r(t_k^*)} \leq 1 \leq \cosh(2r(t_k))$. Therefore, for the case with $x(t_k^*) <0$, we also have
\begin{equation}
\begin{aligned}
    &2(\omega-\epsilon_\mathrm{max})\cosh(2r(t_k')) \leq D(t_k') \leq D(t_k^*) 
    \leq 2\omega \cosh(2r(t_k))\\
    &\Rightarrow \cosh(2r(t_k')) \leq \left( \frac{1}{1-(\epsilon_{\rm max}/\omega)} \right) \cosh (2r(t_k)),
\end{aligned}
\end{equation}
which completes the proof.
\end{proof}    

\section{Optimal control for the large squeezing limit}
\subsection{Equations of motion for $r\gg1$} 
For large $r$ limit, $r\gg 1$, we can rewrite Eqs.~\eqref{eq:Eq_theta} and \eqref{eq:Eq_M} as
\begin{equation}
\label{eq:eom_at_r>>1}
\begin{aligned}
\frac{\partial r}{\partial t} &= \epsilon(t) \sin (\varphi/2)\cos (\varphi/2),\\
    \frac{\partial \varphi}{\partial t} &= 2\big(\omega -\epsilon(t) \sin^2 (\varphi/2)\big),\\
    \frac{\partial \theta}{\partial t} &= 0,
    \end{aligned}
\end{equation}
since $\coth (2r) \approx 1$ and $\frac{1}{\sinh(2r)} \approx 0$.  In particular, we note that the control parameter $\epsilon(t)$ can be explicitly written
\begin{equation}
\epsilon(t) = \frac{2\omega -\frac{\partial \varphi}{\partial t}}{2\sin^2 (\varphi/2)},
\end{equation}
as a function only depending on $\varphi$ and $\dot\varphi$. We also note that $ \frac{\partial \varphi}{\partial t} = 2\big(\omega -\epsilon(t) \sin^2 (\varphi/2)\big)\geq 0$, $\varphi(t)$ is a monotonically increasing function of $t$. From this, we can express $t$ as an inverse function of $\varphi$. Then, the squeezing parameter after time $T$ with the total phase $\int_0^T \dot \varphi(t) dt = \Phi = \varphi(T)$ can be rewritten as 
\[\begin{split}
\label{eq:closed_form_of_r}
    r(T, \Phi) &= \int_0^{\Phi}d \varphi \frac{\partial t}{\partial \varphi}
\frac{\partial r}{\partial t} \\
&=\int_0^{\Phi}d \varphi \frac{\partial t}{\partial \varphi}
\epsilon(t) \sin (\varphi/2)\cos (\varphi/2)\\
&=\int_0^{\Phi}d \varphi \frac{\partial t}{\partial \varphi}
\left( \frac{2\omega -\frac{\partial \varphi}{\partial t}}{2\sin^2 (\varphi/2)} \right) \sin (\varphi/2)\cos (\varphi/2) \\
&= \int_0^{\Phi}d \varphi \left(2\omega\frac{\partial t}{\partial \varphi} -1 \right)\frac{\cot(\varphi/2)}{2},
\end{split}
\]
where the condition for the total evolution time $\int_0^T dt = T$ is given by
\begin{equation}
\label{eq:norm_cond}
T = \int_0^T dt = \int_0^\Phi d\varphi \frac{\partial t}{\partial \varphi}.
\end{equation}
We also note that the boundary condition $0 \leq \epsilon(t) \leq 1$ becomes
\begin{equation}
\label{eq:boundary_cond}
\frac{1}{2\omega}\leq \frac{\partial t}{\partial \varphi}\leq \frac{\sec^2(\varphi/2)}{2\omega}.
\end{equation}

\subsection{Optimal control protocol for given $T$ and $\Phi$}
We show that the optimal control protocol of $\epsilon$ is given by the on/off control given as the following Lemma:
\begin{lemma}\label{proposition:the_ground_state}
    For a given total evolution time $T$ and total phase $\Phi$, a maximum value of $r(T, \Phi)$ is achievable only by the following on/off process:
    \[\label{eq:the_ground_state_given_Phi_f}
    \frac{\partial t}{\partial \varphi}=  f(\varphi)= \begin{cases} 
        {\frac{\sec^2(\varphi/2)}{2\omega}}, & \mathrm{for }\;\,\cot(\varphi/2)> \cot(\phi^T/2) \\
		\frac{1}{2\omega} , & \mathrm{for }\; \,\cot(\varphi/2)\leq\cot(\phi^T/2),
     \end{cases}
    \]
    which is equivalent to 
        \[\label{eq:d_state_given_Phi_f}
    \epsilon(t) = \begin{cases} 
        \omega, & \mathrm{for }\;\,\cot(\varphi(t)/2)> \cot(\phi^T/2) \\
		0 , & \mathrm{for }\; \,\cot(\varphi(t)/2)\leq\cot(\phi^T/2),
     \end{cases}
    \]
    where $\phi^T$ is defined such that $\frac{\partial t}{\partial \varphi}$ satisfies the normalization condition, $T = \int_0^\Phi d\varphi \frac{\partial t}{\partial \varphi}$.
\end{lemma}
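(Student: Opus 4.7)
The plan is to recognize the problem as a linear optimization in the unknown function $f(\varphi) := \partial t/\partial \varphi$ and to apply a bang-bang argument with a Lagrange multiplier. By Eq.~\eqref{eq:closed_form_of_r}, the squeezing decomposes as
\begin{equation}
r(T,\Phi) = \omega \int_0^\Phi f(\varphi)\cot(\varphi/2)\,d\varphi - \frac{1}{2}\int_0^\Phi \cot(\varphi/2)\,d\varphi,
\end{equation}
so it depends on the control only through a linear functional of $f$. The admissible set is fixed by the equality constraint $\int_0^\Phi f\,d\varphi = T$ from Eq.~\eqref{eq:norm_cond} together with the pointwise box constraint $1/(2\omega) \le f(\varphi) \le \sec^2(\varphi/2)/(2\omega)$ from Eq.~\eqref{eq:boundary_cond}. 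Since both the objective and the equality constraint are linear in $f$, an extremal optimizer is expected to saturate one of the box bounds pointwise almost everywhere.

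To identify the switching rule, I would introduce a Lagrange multiplier $\lambda \in \mathbb{R}$ for the normalization and maximize $\int_0^\Phi [\omega\cot(\varphi/2) - \lambda]\,f(\varphi)\,d\varphi$ pointwise. The maximizer is $f = \sec^2(\varphi/2)/(2\omega)$ where $\omega\cot(\varphi/2) > \lambda$ and $f = 1/(2\omega)$ where $\omega\cot(\varphi/2) < \lambda$; setting $\lambda \equiv \omega\cot(\phi^T/2)$ reproduces exactly the threshold rule in Eq.~\eqref{eq:the_ground_state_given_Phi_f}, which via $\epsilon(t) = (2\omega - \dot\varphi)/(2\sin^2(\varphi/2))$ translates to the on-off control in Eq.~\eqref{eq:d_state_given_Phi_f}. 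I would make this rigorous by a direct comparison: letting $f^\ast$ denote the candidate bang-bang control and $\tilde f$ any other admissible choice, set $\delta := \tilde f - f^\ast$. The shared normalization gives $\int_0^\Phi \delta\,d\varphi = 0$, hence
\begin{equation}
r(\tilde f) - r(f^\ast) = \int_0^\Phi \delta(\varphi)\bigl[\omega\cot(\varphi/2) - \lambda\bigr]\,d\varphi.
\end{equation}
On the on region, where the bracket is non-negative and $f^\ast = f_{\max}$, admissibility forces $\delta \le 0$; on the off region both signs flip. The integrand is therefore pointwise non-positive, proving $r(\tilde f) \le r(f^\ast)$, and the inequality is strict wherever $\tilde f \ne f^\ast$ off the zero-measure level set $\{\varphi : \cot(\varphi/2) = \cot(\phi^T/2)\}$, giving uniqueness.

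The remaining task is to verify that a threshold $\phi^T$ exists realizing the prescribed $T$. Viewing $T(\phi^T) := \int_0^\Phi f^\ast(\varphi;\phi^T)\,d\varphi$ as a function of the threshold, lowering $\cot(\phi^T/2)$ enlarges the on region, which raises $T$ because $f_{\max} \ge f_{\min}$; hence $T(\phi^T)$ is continuous and monotone, and the intermediate-value theorem fixes $\phi^T$ uniquely for any feasible $T$ between the all-off value $\Phi/(2\omega)$ and the all-on value $\int_0^\Phi \sec^2(\varphi/2)/(2\omega)\,d\varphi$. The main obstacle I anticipate is the multiwinding case $\Phi > 2\pi$: because $\cot(\varphi/2)$ is $2\pi$-periodic, a single threshold value $\cot(\phi^T/2)$ triggers one on-off switching per winding, so one must check that $T(\phi^T)$ remains monotone across the periodic structure and that $\phi^T$ is unambiguously specified modulo the periodicity. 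The Lagrangian comparison itself is pointwise and hence insensitive to the number of windings, so once $\phi^T$ is pinned down the optimality conclusion follows immediately.
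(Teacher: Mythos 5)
Your proposal is correct and follows essentially the same route as the paper: the paper's proof is exactly your direct-comparison step, writing $r(g)-r(f^\ast)=\omega\int_0^\Phi\bigl(\cot(\varphi/2)-\cot(\phi^T/2)\bigr)\Delta(\varphi)\,d\varphi$ after subtracting the multiplier times the vanishing integral $\int_0^\Phi\Delta\,d\varphi=0$, and concluding pointwise non-positivity from the box constraints (phrased there as a contradiction rather than a Lagrangian comparison). Your additional intermediate-value argument for the existence and monotone dependence of the threshold $\phi^T$ on $T$ is a point the paper leaves implicit, and is a welcome supplement.
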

\begin{proof}
We prove the statement by contradiction. Let us assume that the function, $f(\varphi)$ from Eq.~\eqref{eq:the_ground_state_given_Phi_f} does not provide the maximum value. This indicates that there is another function, $g(\varphi)\neq f(\varphi)$, such that
\begin{equation}
\label{eq:e_g-e_f}
r(T, \Phi)\vert_{\frac{dt}{d\varphi} = g(\varphi)} > r(T, \Phi)\vert_{\frac{dt}{d\varphi} = f(\varphi)}.
\end{equation}
As any function $\frac{\partial t}{\partial\varphi} = g(\varphi)$ should obey the boundary condition in Eq.~\eqref{eq:boundary_cond}, and by noting that $f(\varphi)$ is either the maximum or the minimum of the bound, we obtain
\[\label{eq:g-f}
\Delta(\varphi) = g(\varphi)-f(\varphi)\begin{cases} 
        \leq 0, & \mathrm{for }\;\,\cot(\varphi/2)> \cot(\phi^T/2) \\
		\geq 0 , & \mathrm{for }\; \,\cot(\varphi/2)\leq\cot(\phi^T/2).
     \end{cases}
\]
With the normalization condition in Eq.~\eqref{eq:norm_cond}, i.e, $\int_0^\Phi d\varphi f(\varphi) = T = \int_0^\Phi d\varphi g(\varphi)$, Eq.~\eqref{eq:g-f} yields
\[
\int_0^\Phi \Delta(\varphi) d\varphi =  \int_{\cot(\varphi)> \cot(\phi^T/2)} \Delta(\varphi) d\varphi + \int_{\cot(\varphi)\leq \cot(\phi^T/2)} \Delta(\varphi) d\varphi = 0.
\]
We now rewrite Eq.~\eqref{eq:e_g-e_f} as
\begin{equation}
\label{eq:e_g-e_f_normal}
\begin{aligned}
&\left[ r(T, \Phi)\vert_{\frac{dt}{d\varphi} - g(\varphi)} \right] - \left[ r(T, \Phi)\vert_{\frac{dt}{d\varphi} = f(\varphi)} \right]\\
& \quad  = \left[\int_0^{\Phi}d \varphi \left(2\omega\frac{\partial t}{\partial \varphi}\vert_{\frac{dt}{d\varphi} - g(\varphi)} -1 \right)\frac{\cot(\varphi/2)}{2} \right] - \left[ \int_0^{\Phi}d \varphi \left(2\omega\frac{\partial t}{\partial \varphi}\vert_{\frac{dt}{d\varphi} - f(\varphi)} -1 \right)\frac{\cot(\varphi/2)}{2} \right] \\
& \quad= \omega \int_0^{\Phi} \cot(\varphi/2)  \Delta(\varphi) d \varphi\\
& \quad= \omega \left[ \int_{\cot(\varphi/2)> \cot(\phi^T/2)} \cot(\varphi/2) \Delta(\varphi) d\varphi+\int_{\cot(\varphi/2)\leq \cot(\phi^T/2)} \cot(\varphi/2)  \Delta(\varphi) d\varphi \right]\\
& \quad= \omega \left[ \int_{\cot(\varphi/2)> \cot(\phi^T/2)} \cot(\varphi/2) \Delta(\varphi) d\varphi+\int_{\cot(\varphi/2)\leq \cot(\phi^T/2)} \cot(\varphi/2) \Delta(\varphi) d\varphi - \cot(\phi^T/2)\int_0^{\Phi}\Delta(\varphi)d\varphi \right] \\
& \quad= \omega \left[ \int_{\cot(\varphi/2)> \cot(\phi^T/2)} (\underbrace{\cot(\varphi/2)-\cot(\phi^T/2)}_{\geq 0}) \underbrace{\Delta(\varphi)}_{\leq 0} d\varphi +\int_{\cot(\varphi/2)\leq \cot(\phi^T/2)}  (\underbrace{\cot(\varphi/2)-\cot(\phi^T/2)}_{\leq 0}) \underbrace{\Delta(\varphi)}_{\geq 0} d\varphi \right]\\
& \quad\leq 0,
\end{aligned}
\end{equation}
which is a contradiction with Eq.~\eqref{eq:e_g-e_f}. 
\end{proof}

\subsection{Optimal squeezing for a given winding number $n$}
Now let us turn our focus to evaluating the optimal value of $\Phi$ that yields the maximum value of $r(T,\Phi)$ for the given total time $T$. Let us first find the optimal value for a fixed winding number $n$, i.e.,  for the cases where $\Phi_n = 2n \pi +\tilde{\phi}_n$ with $\tilde{\phi}_n < 2\pi$. For this case, the optimal control protocol with the fixed winding number $n$, $\frac{dt}{d\varphi} = f_n(\varphi)$, given by Lemma~\ref{proposition:the_ground_state} leads to
\[
\label{eq:local_sol}
\frac{\partial t}{\partial \varphi}
=f_n(\varphi) = \begin{cases} 
        \frac{\sec^2(\varphi/2)}{2\omega} & \mathrm{for }\;\,2m\pi \leq \varphi < 2m\pi+\phi_n \\
		\frac{1}{2\omega} , & \mathrm{for }\; \,2m\pi +\phi_n \leq \varphi <2(m+1)\pi\\
        \frac{\sec^2(\varphi/2)}{2\omega},  & \mathrm{for }\;\,2n\pi \leq \varphi < 2n \pi + \tilde{\phi}_n,
     \end{cases}
\]
where $0\leq m< n$ is an integer. The maximum squeezing $r(T,\Phi_n)$ obtained from the protocol can be directly evaluated as
\[\begin{split}
\label{eq:E_G(T,n)}
    r(T,\Phi_n)&=\int_{0}^{\Phi_n} \left(2\omega f(\varphi)-1 \right)\frac{\cot(\varphi/2)}{2} d\varphi \\
    &=n\int_{0}^{\phi_n} (\sec^2(\varphi/2)-1)\frac{\cot(\varphi/2)}{2}d\varphi+\int_{0}^{\tilde{\phi}_n} (\sec^2(\varphi/2)-1)\frac{\cot(\varphi/2)}{2}d\varphi\\
    &=n\int_{0}^{\phi_n} \frac{\tan(\varphi/2)}{2}d\varphi+\int_{0}^{\tilde{\phi}_n} \frac{\tan(\varphi/2)}{2}d\varphi \\
    &=-n\ln (\cos (\phi_n/2))-\ln (\cos (\tilde{\phi}_n/2)).
\end{split}
\]
We also note that the normalization condition becomes
\[\label{eq:condition_Phi_n}
\begin{split}
    T &=\int_{0}^{\Phi_n}f_n(\varphi) d\varphi \\
    &=n\Big(\int_{0}^{\phi_n}\frac{\sec^2(\varphi/2)}{2\omega}d\varphi +  \int_{\phi_n}^{2\pi}\frac{1}{2\omega}d\varphi\Big)+\int_{0}^{\tilde{\phi}_n}\frac{\sec^2(\varphi/2)}{2\omega}d\varphi \\
    &=\frac{n}{\omega}\tan(\phi_n/2)+\frac{n}{\omega}(\pi-\phi_n/2)+\frac{1}{\omega}\tan(\tilde{\phi}_n/2).
\end{split}
\]
Hence, to find the optimal $\Phi_n^T = \arg\max_{\Phi_n} r(T,\Phi_n)$ that maximizes Eq.~\eqref{eq:E_G(T,n)} for given $T$, we take the derivatives with respect to $\phi_n$ to meet the following condition:
\[\label{eq:par_div_E_G}
\frac{\partial r(T, \Phi_n)}{\partial \phi_n}=\frac{n}{2} \tan (\phi_n/2)+\frac{1}{2}\tan(\tilde{\phi}_n/2)\frac{\partial \tilde{\phi}_n}{\partial \phi_n}=0.
\]
Meanwhile, by taking the derivatives of the normalization condition in Eq.~\eqref{eq:condition_Phi_n}, we obtain
\[
0=\frac{\partial T}{\partial \phi_n}=\frac{n}{2\omega}(\sec^2(\phi_n/2)-1)+\frac{1}{2\omega}\sec^2(\tilde{\phi}_n/2)\frac{\partial \tilde{\phi}_n}{\partial \phi_n},
\]
which leads to 
\begin{equation}
\frac{\partial \tilde{\phi}_n}{\partial \phi_n} = -\frac{n(\sec^2(\phi_n/2)-1)}{\sec^2(\tilde{\phi}_n/2)} = -n\cos^2(\tilde{\phi}_n/2) \tan^2(\phi_n/2).
\end{equation}
By substituting this expression to Eq.~\eqref{eq:par_div_E_G}, we obtain
\begin{equation}
\begin{aligned}
\label{eq:condition_Phi_n'}
&\frac{1}{2} \left[ n \tan (\phi_n/2) + \tan(\tilde{\phi}_n/2) \left( -n\cos^2(\tilde{\phi}_n/2) \tan^2(\phi_n/2) \right) \right] =0 \\
& \Rightarrow \tan(\phi_n/2) \left( 1 - \sin(\tilde{\phi}_n/2)\cos(\tilde{\phi}_n/2) \tan(\phi_n/2) \right) =0 \\
& \Rightarrow \sin \tilde \phi_n = \frac{2}{\tan(\phi_n/2)}.
\end{aligned}
\end{equation}
Within the range $0 \leq \tilde \phi_n \leq 2\pi$, we obtain the solution satisfying Eq.~\eqref{eq:condition_Phi_n'} as
\begin{equation}
\tilde \phi_n^T = \pi - \sin^{-1}\left( \frac{2}{\tan(\phi_n^T/2)} \right).
\end{equation}
This leads to the optimal phase
\begin{equation}
    \Phi_n^T = 2n\pi + \tilde\phi_n^T = (2n+1)\pi - \sin^{-1}\left( \frac{2}{\tan(\phi_n^T/2)} \right),
\end{equation}
where the angle $\phi_n^T$ corresponding to this optimal control protocol is given by solving the normalization condition in Eq.~\eqref{eq:condition_Phi_n} as
\begin{equation}
    T =\frac{n}{\omega}\tan(\phi_n^T/2)+\frac{n}{\omega}(\pi-\phi_n^T/2)+\frac{1}{\omega} \tan\left[\frac{\pi}{2} - \frac{1}{2} \sin^{-1} \left( \frac{2}{\tan(\phi_n^T/2)}\right) \right] .
\end{equation}

The explicit control protocol is composed of two periods. Eq.~\eqref{eq:local_sol} indicates that $\frac{\partial t}{\partial \varphi}$ has to be $\frac{\sec^2 (\varphi/2)}{2\omega}$ or $\frac{1}{2\omega}$, which correspond to $\epsilon(t)=\omega$ or $\epsilon(t)=0$.  By integrating $\frac{\partial t}{\partial \varphi}$ for each period, we obtain the explicit time interval for turning on the squeezing (i.e., $\epsilon(t) = \omega$) as follows:
\[
\Delta T_\mathrm{ON} = \int_{2 m \pi}^{2 m \pi +\phi_n^T} d\varphi\frac{\partial t}{\partial \varphi} =\int_{2 m \pi}^{2 m \pi +\phi_n^T} d\varphi\frac{\sec^2 (\varphi/2)}{2\omega} = \frac{1}{\omega}\tan (\phi^T_n /2),
\]
for each $m$. Similarly, the time interval for turning off $\Delta T_\mathrm{OFF}$ (i.e., $\epsilon(t)=0$) is obtained as
\[
\Delta T_\mathrm{OFF} = \int_{2 m \pi+\phi_n^T}^{2(m+1)\pi} d\varphi\frac{\partial t}{\partial \varphi} = \int_{2m\pi+\phi_n^T}^{2(m+1)\pi} d\varphi\frac{1}{2\omega} = \frac{1}{\omega}(\pi - \phi^T_n /2).
\]

\section{Proof of Theorems with exponential scaling}\label{section:exponential}
\subsection{Proof of Theorem~3}
In this section, we prove Theorem~3 in the main text. Let us state Theorem~3 again here.
\addtocounter{theorem}{-2}
\begin{theorem}
\label{theorem:exponetial_g=1} 
The scaling bound of the QFI for a long-time limit ($T \gg 1$) is given by
\begin{equation}
\mathcal{F}_\omega(T) \propto e^{\Gamma \omega T}
\end{equation}
with $\Gamma \approx 0.9745$. This bound can be saturated by taking the winding number $n \approx 0.169 \omega T$.
\end{theorem}
\begin{proof}
In order to prove Theorem~3, we find the global optimum of the squeezing for a given time $T$, i.e.,
\[
r_\mathrm{max}(T)=\max_{n} r(T,\Phi_n^T),
\]
and identify the optimal winding number in the limit of $T \gg 1$. By considering the local optimal values of Eq.~\eqref{eq:E_G(T,n)}, we have 
\begin{equation}
    \frac{r(T,\Phi_n^T)}{\omega T} = - \left( \frac{n}{\omega T} \right) \ln(\cos(\phi_n^T/2)) - \left( \frac{1}{\omega T} \right) \ln(\cos(\tilde\phi_n/2)) \overset{T \gg 1 }{\approx}  -\left( \frac{n}{\omega T} \right) \ln(\cos(\phi_n^T/2)),
\end{equation}
in the limit of $n \gg 1$ and $\omega T \gg 1$. Also, the boundary condition for the total evolution time given in Eq.~\eqref{eq:condition_Phi_n} can be rewritten as 
\begin{equation}
    \label{eq:boundary_cond_large}
    1 = \left(\frac{n}{\omega T} \right) \tan(\phi_n^T/2) + \left(\frac{n}{\omega T} \right)(\pi-\phi_n^T/2) + \frac{1}{\omega T} \tan(\tilde \phi_n /2) \overset{T \gg 1 }{\approx}  \left(\frac{n}{\omega T} \right) \left[ \tan(\phi_n^T/2) + (\pi-\phi_n^T/2) \right].
\end{equation}
Here, the terms related to the trajectories associated with the last winding number (i.e., $\tilde \phi_n$) can be ignored in the limit of $n \gg 1$ and $\omega T \gg 1$, as they contribute only polynomially to the QFI, as shown in Lemma~\ref{lemma:bound_of_r}. Also, in this limit, $\frac{n}{\omega T}$ can be regarded as continuous. This allows us to rewrite the squeezing as a function of $\phi_n^T$,
\[
\label{eq:opt_r_exp_large_T}
r(T, \Phi_n^T) \overset{T \gg 1 }{\approx}  - \frac{ \omega T \ln (\cos (\phi_n^T/2))}{\tan(\phi_n^T/2)+(\pi-\phi_n^T/2)}.
\]
We then find the optimal $\phi_n^T$ as
\[\label{eq:condition_min_E_G(x)}
\frac{d r(T, \Phi_n^T)}{d\phi_n^T}=\omega T \frac{\tan(\phi_n^T/2)\left(\tan(\phi_n^T/2) +\pi - \phi_n^T/2 \right)+(\sec^2(\phi_n^T/2)-1)\ln(\cos(\phi_n^T/2)) }{2\left(\tan(\phi_n^T/2)+\pi-\phi_n^T/2\right)^2}=0,
\]
which leads to the following condition:
\[
\begin{aligned}
\tan(\phi_n^T/2)(1+\ln(\cos(\phi_n^T/2)))+\pi-\phi_n^T/2=0.
\end{aligned}
\]
The solution of the above equation is obtained at $\phi_{n, {\rm opt}}^T \approx 2.664$. From the boundary condition in Eq.~\eqref{eq:boundary_cond_large}, we have
\begin{equation}
    n_{\rm opt} \overset{T \gg 1 }{\approx}  \frac{\omega T}{\tan(\phi_{n, {\rm opt}}^T/2)+\pi-\phi_{n, {\rm opt}}^T/2} \approx 0.1690 (\omega T),
\end{equation}
which linearly increases by the total evolution time $T$. Finally, the squeezing parameter for this solution is obtained from Eq.~\eqref{eq:opt_r_exp_large_T} as
\begin{equation}
    r_{\rm max}(T) \overset{T \gg 1 }{\approx}  \frac{\omega T}{\tan(\phi_{n, {\rm opt}}^T/2)} \approx 0.2436 (\omega T).
\end{equation}
As the QFI scales exponentially with $r_{\rm max}(T)$, i.e.,
\begin{equation}
    {\cal F}_\omega(T) \propto e^{4r_{\rm max}(T)} = e^{ \Gamma \omega T},
\end{equation}
we have $\Gamma \approx 4 \times 0.2436 = 0.9745$ for large $T$, which completes the proof.
\end{proof}
    
\subsection{Proof of Theorem~5}
We now present the proof of Theorem~5 in the main manuscript:
\addtocounter{theorem}{1}
\begin{theorem}\label{theorem:expon_g<1}
With the control parameter in the range $0 \leq \epsilon(t) \leq \epsilon_\mathrm{max}$, the fundamental scaling limit is given as
\begin{equation}
\mathcal{F}_\omega(T) \propto e^{\Gamma(\epsilon_\mathrm{max}) \omega T}.
\end{equation}
\end{theorem}
\begin{proof}
We follow a similar argument to that in the proof of Theorem~\ref{theorem:exponetial_g=1}, by starting with the same equation for $r(\Phi)$ given in Eq.~\eqref{eq:closed_form_of_r},
\[
\begin{split}
        r(\Phi)= \int_0^{\Phi}d \varphi \left(2\omega f(\varphi) -1 \right)\frac{\cot(\varphi/2)}{2},
\end{split}
\]
with $\frac{\partial t}{\partial \varphi}=  f(\varphi)$ satisfying the same normalization condition,
\[
T =\int_0^\Phi f(\varphi)d\varphi.
\]
Meanwhile, the boundary condition of $f(\varphi)$ is modified in terms of $\epsilon_\mathrm{max}$ as
\[\label{eq:boundary_cond_f}
\frac{1}{2\omega}\leq f(\varphi)\leq \frac{1}{2(\omega-\epsilon_\mathrm{max}\sin^2(\varphi/2))}.
\]
The optimal solution for a given winding number $n$ is written as
\[
f_n(\varphi) = \begin{cases} 
       \frac{1}{2(\omega-\epsilon_\mathrm{max}\sin^2(\varphi/2))}, & \mathrm{for }\;\,2m\pi \leq \varphi < 2m\pi+\phi_n^T \\
		\frac{1}{2\omega} , & \mathrm{for }\; \,2m\pi +\phi_n^T \leq \varphi <2(m+1)\pi\\
        \frac{1}{2(\omega-\epsilon_\mathrm{max}\sin^2(\varphi/2))},  & \mathrm{for }\;\,2n\pi \leq \varphi < 2n\pi+\tilde{\phi}_n^T 
     \end{cases},
\]
with the maximum value of $r(T,\Phi_n^T)$,
\[\begin{split}
    r(T,\Phi_n^T)&=n\int_{0}^{\phi_n^T} \Big(\frac{1}{1-(\epsilon_\mathrm{max}/\omega)\sin^2(\varphi/2)}-1\Big)\frac{\cot(\varphi/2)}{2}d\varphi+\int_{0}^{\tilde{\phi}_n^T} \Big(\frac{1}{1-(\epsilon_\mathrm{max}/\omega)\sin^2(\varphi/2)}-1\Big)\frac{\cot(\varphi/2)}{2}d\varphi\\
    &=-\frac{n}{2}\ln(1-(\epsilon_\mathrm{max}/\omega)\sin^2(\phi^T_n/2))-\frac{1}{2}\ln(1-(\epsilon_\mathrm{max}/\omega)\sin^2(\tilde{\phi}^T_n/2)),
\end{split}
\]
and the normalization condition
\[
\begin{split}
    T&=\int_{0}^{\Phi}f_n(\varphi) d\varphi
    =n\Big(\int_{0}^{\phi_n^T} \frac{1}{2(\omega-\epsilon_\mathrm{max}\sin^2(\varphi/2))}d\varphi
    +  \int_{\phi_n^T}^{\pi}\frac{1}{2\omega}d\varphi\Big)
    +\int_{0}^{\tilde{\phi}_n^T}\frac{1}{2(\omega-\epsilon_\mathrm{max}\sin^2(\varphi/2))}d\varphi\\
    &=\frac{n}{\omega}\frac{\tan^{-1}(\sqrt{1-(\epsilon_\mathrm{max}/\omega)}\tan(\phi_n^T/2))}{\sqrt{1-(\epsilon_\mathrm{max}/\omega)}}+\frac{n}{\omega}(\pi-\phi_n^T/2)+\frac{1}{\omega}\frac{\tan^{-1}(\sqrt{1-(\epsilon_\mathrm{max}/\omega)}\tan(\tilde{\phi}_n^T/2))}{\sqrt{1-(\epsilon_\mathrm{max}/\omega)}}.
\end{split}
\]

Similarly to the case with $\epsilon_\mathrm{max}=\omega$, we can calculate the optimal scale of $r(T)$. In the limit of $T\gg 1$ we obtain
\begin{equation}
\begin{aligned}
    r(T,\Phi_n^T) &\overset{T \gg 1}{\approx} -\frac{n}{2}\ln(1-(\epsilon_\mathrm{max}/\omega)\sin^2(\phi^T_n/2)) \\    
    T &\overset{T \gg 1}{\approx} \frac{n}{\omega} \left[ \frac{\tan^{-1}(\sqrt{1-(\epsilon_\mathrm{max}/\omega)}\tan(\phi_n^T/2))}{\sqrt{1-(\epsilon_\mathrm{max}/\omega)}} + (\pi-\phi_n^T/2) \right],
\end{aligned}
\end{equation}
which leads to the following expression
\[
r(T, \Phi_n^T)=-\frac{\omega T}{2} \ln(1-(\epsilon_\mathrm{max}/\omega)\sin^2(\phi_n^T/2)) \left[ \frac{\tan^{-1}(\sqrt{1-(\epsilon_\mathrm{max}/\omega)}\tan(\phi_n^T/2))}{\sqrt{1-(\epsilon_\mathrm{max}/\omega)}} + (\pi-\phi_n^T/2)\right]^{-1}.
\]
The maximum value of $r$ for a given total evolution time $T$ is then obtained by 
\begin{equation}
    r_{\rm max}(T, \epsilon_{\rm max}) = (\omega T) \max_{0\leq \phi_n^T\leq \pi} \left[ -\frac{1}{2} \ln(1-(\epsilon_\mathrm{max}/\omega)\sin^2(\phi_n^T/2)) \left[ \frac{\tan^{-1}(\sqrt{1-(\epsilon_\mathrm{max}/\omega)}\tan(\phi_n^T/2))}{\sqrt{1-(\epsilon_\mathrm{max}/\omega)}} + (\pi-\phi_n^T/2)\right]^{-1}  \right].
\end{equation}
The result of the optimization can be obtained numerically, which is given in the main manuscript Fig.~3(a) (inset figure). From this, we obtain the bound of QFI with $\epsilon(t)\leq \epsilon_\mathrm{max}<\omega$,
\[
\mathcal{F}_\omega \propto e^{4r_{\rm max}(T, \epsilon_{\rm max})}= e^{\Gamma(\epsilon_\mathrm{max})\omega T},
\]
where $\Gamma(\epsilon_{\rm max}) = \frac{4r_{\rm max}(T, \epsilon_{\rm max})}{\omega T}$.
\end{proof}

\subsection{QFI scaling under thermal dissipation}
We analyze the scaling behavior of the QFI under thermal dissipation described by the following Lindblad equation,
\[
\label{eq:Lindblad_eq}
\frac{d}{dt}\R=\mathcal{L}(\rho)=i[\R,\h(t)] + \gamma (N_0+1) (\an \R \an^\dagger -\tfrac{1}{2}\{\an^\dagger \an, \R \}) + \gamma N_0 (\an^\dagger \R \an -\tfrac{1}{2}\{\an \an^\dagger, \R \}).
\]
We provide the bound on QFI as follows:
\begin{theorem}\label{theorem:expon_open_dyn}
    For $0 \leq \epsilon(t) \leq \epsilon_\mathrm{max}$, the fundamental scaling limit of QFI is given by
\begin{equation}
    \mathcal{F}_\omega(T) \leq A T e^{ \left( \frac{\Gamma(\epsilon_{\rm max})  \omega }{2} - \gamma 
    \right)T},
\end{equation}
with some coefficient $A$.
\end{theorem}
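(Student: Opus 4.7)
The approach is to exploit the Gaussian structure preserved by the Lindblad dynamics. Since the vacuum initial state, the quadratic Hamiltonian $\h(t)$, and the linear Lindblad operators $\an, \an^\dagger$ are all Gaussian, $\R(t)$ remains a zero-mean Gaussian state for all $t$, fully described by its $2\times 2$ covariance matrix $\Sigma(t)$. The QFI then reduces to an explicit functional of $\Sigma(T)$ and $\partial_\omega \Sigma(T)$, so bounding $\mathcal{F}_\omega$ reduces to analyzing the deterministic covariance dynamics.

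First, I would derive the standard covariance equation $\dot\Sigma = A(t)\Sigma + \Sigma A(t)^T + D$ from the Lindblad equation and decompose $A(t) = A_H(t) - (\gamma/2)\mathbb{I}$, separating the purely Hamiltonian symplectic generator $A_H(t)$ from the uniform damping. The deterministic propagator then factorizes as $M(T,t) = e^{-\gamma(T-t)/2}\, M_H(T,t)$, where $M_H$ is the Hamiltonian-only symplectic propagator. Differentiating the covariance with respect to $\omega$ via Duhamel's formula gives
\begin{equation*}
\partial_\omega \Sigma(T) = \int_0^T M(T,t)\bigl[(\partial_\omega A_H)\Sigma(t) + \Sigma(t)(\partial_\omega A_H)^T\bigr]M(T,t)^T\,dt,
\end{equation*}
in which the dissipative factor appears twice and produces a net damping $e^{-\gamma(T-t)}$ inside the integrand; this is the source of the $-\gamma$ in the target exponent.

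Next I would combine Cauchy--Schwarz with the integral representation, bounding the Hamiltonian contribution via Theorem~\ref{theorem:expon_g<1} through the Bloch--Messiah decomposition of $M_H$, whose largest singular value equals $e^{r}$, so $\|M_H(T,t)\|^2 \lesssim e^{(\Gamma(\epsilon_{\rm max})/2)\omega(T-t)}$ whenever the squeezing $r$ saturates the pure-state bound. In the long-time dissipative regime thermalization keeps $\|\Sigma^{-1}\|$ bounded, so the Gaussian QFI reduces to $\mathcal{F}_\omega \lesssim \|\partial_\omega\Sigma(T)\|^2$ up to $O(1)$ prefactors. Combining the estimates yields $\mathcal{F}_\omega(T) \leq AT\,e^{(\Gamma(\epsilon_{\rm max})\omega/2 - \gamma)T}$: the prefactor $T$ comes from Cauchy--Schwarz, the $-\gamma$ exponent from the dissipative propagator, and the reduction from $\Gamma(\epsilon_{\rm max})$ (pure case) to $\Gamma(\epsilon_{\rm max})/2$ reflects the mixed-state Gaussian QFI scaling $\sim e^{2r}$ rather than the pure-state $\sim e^{4r}$.

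The main obstacle is controlling the purity-dependent prefactor of the Gaussian QFI functional, which diverges as $\gamma\to 0$ because the state becomes pure. I would address this by splitting the analysis into two regimes: for effectively closed times $T \lesssim \ln(1/\gamma)$ the state remains nearly pure and Theorem~\ref{theorem:expon_g<1} applies directly with the undamped exponent $\Gamma(\epsilon_{\rm max})$; for long times $T \gtrsim \ln(1/\gamma)$ dissipation saturates the purity strictly away from unity, regularizing the Gaussian QFI formula and giving the reduced exponent $\Gamma(\epsilon_{\rm max})/2 - \gamma$. This crossover matches the two behaviors described in the main text. A secondary technical step is translating Theorem~\ref{theorem:expon_g<1}---which bounds the pure-state squeezing $r(T)$---into a bound on the symplectic propagator norm $\|M_H(T,t)\|$ used above; this follows from the Bloch--Messiah decomposition.
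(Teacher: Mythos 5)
Your starting point (Gaussian covariance dynamics, factoring the uniform damping out of the propagator) matches the paper's setup, but the two estimates that carry your proof do not hold up. First, the route through $\mathcal{F}_\omega\lesssim\|\Sigma^{-1}\|^2\,\|\partial_\omega\Sigma(T)\|^2$ squares the answer: with $\|M_H(T,t)\|^2\lesssim e^{\Gamma\omega(T-t)/2}$ and $\|\Sigma(t)\|\sim\mu(t)e^{2r(t)}\sim e^{(\Gamma\omega/2-\gamma)t}$, your Duhamel integrand is $\sim e^{(\Gamma\omega/2-\gamma)T}$ uniformly in $t$, so $\|\partial_\omega\Sigma(T)\|\lesssim T e^{(\Gamma\omega/2-\gamma)T}$ and the resulting QFI bound is $T^2 e^{(\Gamma\omega-2\gamma)T}$ --- exponentially weaker than the claim. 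The loss is intrinsic to operator-norm Cauchy--Schwarz: the large directions of $\partial_\omega\Sigma$ align with those of $\Sigma$, and the cancellation inside $\Sigma^{-1}\partial_\omega\Sigma$ is precisely what must be tracked (the analogous norm estimate already overshoots in the closed case, giving $e^{2\Gamma\omega T}$ instead of $e^{\Gamma\omega T}$). Second, your explanation of the factor $1/2$ is not the mechanism at work. The instantaneous QFI of a squeezed thermal state under rotation encoding is $\tfrac{8\mu}{2\mu+1}\sinh^2 2r\le 4\sinh^2 2r\sim e^{4r}$, i.e.\ it still scales as $e^{4r}$ up to an $O(1)$ purity factor; there is no switch to $e^{2r}$. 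The halving comes from the dynamics: the quantity $\mu\sinh 2r$ obeys, after the substitution $s=(\gamma t+\ln(\mu\sinh 2r))/2$, the same asymptotic equations as the closed-system squeezing, so it grows at rate $\Gamma(\epsilon_{\rm max})\omega/2-\gamma$, and the equation for the occupation $\mu$ forces this growth to be shared equally between $\mu$ and $\sinh 2r$. Hence $r$ itself grows only at rate $\Gamma(\epsilon_{\rm max})\omega/8-\gamma/4$, and $\sinh^2 2r\sim e^{(\Gamma\omega/2-\gamma)t}$.

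The missing idea is the paper's key lemma, which bounds the QFI without ever forming $\partial_\omega\Sigma(T)$: one splits the parameter encoding into $M$ hypothetical Lindbladians that each carry the $\omega$-dependence only on a sub-interval, applies convexity of the QFI to the resulting mixture, takes $M\to\infty$, and then uses monotonicity of the QFI under the residual channel $\mathcal{T}e^{\int_t^T d\tau\,\mathcal{L}(\tau)}$ to obtain $\mathcal{F}_\omega(\R(T))\le T\int_0^T dt\,\mathcal{F}_\theta(\R_\theta(t))\le 4T\int_0^T dt\,\sinh^2 2r(t)$. This eliminates the time-ordered propagator, produces the $T$ prefactor, and reduces the whole theorem to the growth rate of $r(t)$ described above. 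Your two-regime discussion of $T\lesssim\ln(1/\gamma)$ versus $T\gtrsim\ln(1/\gamma)$ is a reasonable reading of the main text, but it neither repairs the norm estimate nor substitutes for this convexity--monotonicity argument.
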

\begin{proof}
Since both the Hamiltonian evolution and thermal dissipation are Gaussian processes without displacement, the quantum state at each time $t$ is a Gaussian state, completely described by the covariance matrix,
\[\begin{split}
    V&=\begin{pmatrix}
\mean{\hat{x}^2} & \mean{\hat{x}\hat{p}+\hat{p}\hat{x}}/2 \\
\mean{\hat{x}\hat{p}+\hat{p}\hat{x}}/2 & \mean{\hat{p}^2}
\end{pmatrix}\\
&
=\frac{2\bar{N} +1}{2}\begin{pmatrix}
\cosh{2r}-\sinh{2r} \cos{\varphi}  & \sinh{2r}\sin{\varphi} \\
\sinh{2r}\sin{\varphi} & \cosh{2r}+\sinh{2r} \cos{\varphi} 
\end{pmatrix}= \frac{2\bar{N} +1}{2} R(\phi/2) S(2r) R^T(\phi/2),
\end{split}
\]
where 
$S(2r) = \left(\begin{matrix}
    e^{-2r} & 0 \\
    0 & e^{2r}
\end{matrix}
\right)$ with the squeezing $r$, and 
$R(\varphi/2) = \left(\begin{matrix}
    \cos\varphi/2 & \sin\varphi/2 \\
    -\sin\varphi/2 & \cos\varphi/2
\end{matrix}
\right)$ with the rotation angle $\varphi$, and mean photon number $\bar{N}$.
For convenience, we will define $\mu$, the mean photon number, as 
\[
\mu =\frac{2\bar{N} +1}{2}.
\]
The equation of motion of elements in $V$ is given by
\[
\frac{\partial}{\partial t} \begin{pmatrix}
    \mean{\hat{x}^2}\\
    \mean{\hat{p}^2}\\
    \mean{\hat{x}\hat{p}+\hat{p}\hat{x}}/2\\
    1
\end{pmatrix}
=\begin{pmatrix}
    -\gamma & 0 & 2\omega & \gamma\frac{2N_0+1}{2}\\
    0 & -\gamma & 2\epsilon(t)-2\omega & \gamma\frac{2N_0+1}{2}\\
    \epsilon(t)-\omega & \omega & -\gamma & 0\\
    0 & 0 & 0 & 0
\end{pmatrix} \begin{pmatrix}
    \mean{\hat{x}^2}\\
    \mean{\hat{p}^2}\\
    \mean{\hat{x}\hat{p}+\hat{p}\hat{x}}/2\\
    1
\end{pmatrix}.
\]
If we rewrite the time evolution of the covariance in terms of $r$, $\varphi$, and $\mu$, we obatin,
\[
\frac{\partial }{\partial t} \begin{pmatrix}
    \mu\cosh 2r \\
    \mu\sinh 2r \cos \varphi\\
    \mu\sinh 2r \sin \varphi\\
    1
\end{pmatrix}
=\begin{pmatrix}
    -\gamma & 0 & \epsilon(t) & \gamma\frac{2N_0+1}{2}\\
    0 & -\gamma & \epsilon(t)-2\omega & 0\\
    \epsilon(t) & 2\omega -2\epsilon(t) & -\gamma & 0\\
    0 & 0 & 0 & 0
\end{pmatrix} \begin{pmatrix}
    \mu\cosh 2r \\
    \mu\sinh 2r \cos \varphi\\
    \mu\sinh 2r \sin \varphi\\
    1
\end{pmatrix},
\]
which provides the equation of motion,
\begin{eqnarray}
\label{eq:mu_time_derivative}
\frac{\partial \mu}{\partial t} &=& -\gamma \mu + \gamma \cosh 2r \frac{2N_0 +1 }{2},\\
\label{eq:r_time_derivative}
\frac{\partial r}{\partial t} &=& \frac{\epsilon(t)\sin \varphi}{2} -\gamma \sinh 2r \frac{2N_0+1}{4\mu},\\
\label{eq:varphi_time_derivative}
\frac{\partial \varphi}{\partial t} &=& 2\omega - \epsilon(t)(1-\coth 2r \cos \varphi).
\end{eqnarray}
We also note that the QFI is determined by the final state $\hat\rho(T)$ and its derivative $\partial_\omega \hat\rho(T)$ as~\cite{Braunstein.1994}
\[
\mathcal{F}_\omega(\R(T)) = 2\sum_{k,l}\frac{|\bra{k}\partial_\omega \R(T)\ket{l}|^2}{\lambda_k+\lambda_l},
\]
where $\lambda_k$ and $\ket{k}$ are the eigenvalues and eigenstates of $\hat\rho$, respectively. We then show the following bound of the QFI:
\begin{lemma} \label{lemma:qfi_bound_thermal} The QFI for a given total evolution time $T$ is bounded by the time integral of the squeezing at each time, given as,
    \begin{equation}
        \mathcal{F}_\omega(\R(T)) \leq 4T\int_0^T dt \sinh^2 2r(t).
    \end{equation}
\end{lemma}
We will provide the proof of the Lemma at the end of this section.

Let us now derive an upper bound for the squeezing parameter $r(t)$. We note that Eq.~\eqref{eq:varphi_time_derivative} is the same as the dynamics of the closed system. Moreover, we obtain the following equation from Eq.~\eqref{eq:mu_time_derivative} and Eq.~\eqref{eq:r_time_derivative},
\[
\frac{\partial}{\partial t}\mu\sinh 2r=-\gamma \mu \sinh 2r  + \epsilon(t)\mu \sin \varphi \cosh2r.
\]
For the large $r\gg1$ limit, we have asymptotic equations
\begin{eqnarray}
    \frac{\partial s }{\partial t} &=&\epsilon(t)\sin(\varphi/2)\cos(\varphi/2)\label{eq:open_dr_rgg1}\\
    \frac{\partial \varphi}{\partial t}&=&2(\omega-\epsilon(t)\sin^2(\varphi/2)),
    \label{eq:open_dtheta_rgg1}
\end{eqnarray}
by reparameterizing $s=(\gamma t +\ln(\mu \sinh 2r)/2$, which is exactly the same form as the asymptotic equation of a closed system described by Eq.~\eqref{eq:eom_at_r>>1}. Hence, the maximum scaling of $s$ becomes
$$
s_{\rm max} = \frac  {\Gamma(\epsilon_{\rm max})}{4} (\omega T).
$$

In the section about the closed system, we illustrate that the best scaling is given by 
\[
\lim_{t\rightarrow\infty}\frac{s}{t} \leq \frac{\Gamma (\epsilon_\mathrm{max})\omega}{4},
\]
where $\Gamma(\epsilon_\mathrm{max}) \omega/4$ is the function of maximum value of $\epsilon(t)$, which is equal to $0.2436$ typically when $\epsilon_\mathrm{max}=\omega$.
By definition, we have the maximal scaling,
\[\label{eq:open_scaling_lnnr_1}
\lim_{t\rightarrow\infty}\frac{\ln (\mu\sinh 2r)}{t} \leq \frac{\Gamma(\epsilon_\mathrm{max})\omega}{2}-\gamma.
\]

Now we derive the maximum scaling of $r$ from Eq.~\eqref{eq:open_scaling_lnnr_1}. Let us assume that there is some process $\epsilon(t)$ which gives
\[\label{eq:open_scaling_lnnr}
\lim_{t\rightarrow\infty}\frac{\ln (\mu\sinh 2r)}{t} = \Gamma' \leq \frac{\Gamma(\epsilon_\mathrm{max})\omega}{2}-\gamma.
\]
Meanwhile, we note that Eq.~\eqref{eq:mu_time_derivative} can be rewritten as
\[
\label{eq:open_dmu}
\frac{d}{dt}\mu^2=-2\gamma \mu^2 + \mu \gamma\cosh 2r (1+2N_0).
\]
In the limit of $r\gg1$, this equation can be rewritten as
\[
\frac{d}{dt}\mu^2=-2\gamma \mu^2 +\mu\gamma\sinh 2r(1+2N_0).
\]
By solving the first-order differential equation, we obtain that
\[
\mu(t)^2=e^{-2\gamma t}  \int_0^t d\tau  e^{2\gamma \tau} \mu(\tau)\gamma\sinh 2r(\tau)(1+2N_0).
\]
The scaling of $\mu(t)$ is given by
\[
\lim_{t\rightarrow\infty}\frac{\ln (\mu)}{t}=-\gamma+\lim_{t\rightarrow\infty}\frac{1}{2t}\ln\left(\int_0^t d\tau e^{2\gamma \tau} \gamma\mu(\tau)\sinh 2r(\tau)(1+2N_0)\right).
\]
By using the fact that
\[
\lim_{t\rightarrow\infty}\frac{1}{2t}\ln( e^{2\gamma t} \gamma\mu(t)\sinh 2r(t)(1+2N_0){2})=\frac{\Gamma'}{2}+\gamma,
\]
the scaling of $\mu$ is 
\[
\lim_{t\rightarrow\infty}\frac{\ln (\mu)}{t}=\frac{\Gamma'}{2}.
\]
It is obvious that the scaling of the squeeze parameter is equal to
\[
\lim_{t\rightarrow\infty}\frac{r}{t}=\frac{\Gamma'}{4}.
\]
Since $\Gamma'$ is upper bounded by $\Gamma(\epsilon_\mathrm{max})-\gamma$, the maximum scaling of $r$ is 
\[
\lim_{t\rightarrow\infty}\frac{r}{t}\leq\frac{\Gamma(\epsilon_\mathrm{max})\omega}{8}-\frac{\gamma}{4}.
\]
Hence, from Lemma~\ref{lemma:qfi_bound_thermal}, we obtain the bound,
\[
\mathcal{F}_\omega  \leq T\int_0^T dt \mathcal{F}(\rho(t),i[\rho(t),\an^\dagger\an]) \leq 4T\int_0^T dt \sinh^2 2r(t) \approx ATe^{\frac{\Gamma(\epsilon_\mathrm{max})\omega T}{2} -\gamma},
\]
which completes the main proof.

Finally, we prove the bound provided in Lemma~\ref{lemma:qfi_bound_thermal}:
\begin{equation}
        \mathcal{F}_\omega(\R(T)) \leq 4T\int_0^T dt \sinh^2 2r(t).
    \end{equation}
We express the QFI in terms of $\hat\rho(T)$ and $\partial_\omega \hat\rho(T)$ as
\[
\mathcal{F}_\omega(\R(T)) = 2\sum_{k,l}\frac{|\bra{k}\partial_\omega \R(T)\ket{l}|^2}{\lambda_k+\lambda_l} = {\cal B}(\hat\rho, \partial_\omega \hat\rho),
\]
where we define ${\cal B}(\hat\rho, \hat A) = 2\sum_{k,l}\frac{|\bra{k} \hat A \ket{l}|^2}{\lambda_k+\lambda_l}$ with the eigendecomposition $\hat\rho = \sum_{k} \lambda_k \ket{k}\bra{k}$.
Both $\hat\rho(T)$ and $\partial_\omega \hat\rho(T)$ can be directly obtained by the following time-ordered integral,
\begin{eqnarray}
    \R(T) &=& \mathcal{T} e^{\int_0^T dt \mathcal{L}(t)}\pro{0}{0},\\
    \partial_\omega \R(T) &=& \int_0^T dt \mathcal{T} e^{\int_t^T d\tau \mathcal{L}(\tau)} \left(\partial_\omega {\cal L}(t)\right)  = \int_0^T dt \mathcal{T} e^{\int_t^T d\tau \mathcal{L}(\tau)} i[\R(t),\an^\dagger\an].
\end{eqnarray}
However, the presence of the time-ordered integral poses limitations when deriving general properties and upper bounds of the QFI. To overcome these limitations, we introduce a method to derive an upper bound by eliminating the time-ordered integral.

Let us first consider two hypothetical Lindbladian operators, $\mathcal{L}_1(t;\omega)$ and $\mathcal{L}_1(t;\omega)$, where
\begin{eqnarray}\label{eq:L_1andL_2}
    \mathcal{L}_1(t;\omega=\omega_0) &=& \mathcal{L}_2(t;\omega=\omega_0) = \mathcal{L},\\
    \frac{\partial \mathcal{L}_1(t;\omega)}{\partial \omega}\at[]{\omega=\omega_0} (\hat\rho) &=&
    \begin{cases}
        2\frac{\partial \mathcal{L}}{\partial \omega} (\hat\rho)\at[]{\omega=\omega_0}= 2i[\hat\rho, \an^\dagger\an] &\quad \text{for } 0\leq t< \frac{T}{2}  \\
        0&\quad \text{for } \frac{T}{2}\leq t< T
    \end{cases}\\
    \frac{\partial \mathcal{L}_2(t;\omega)}{\partial \omega}\at[]{\omega=\omega_0} (\hat\rho) 
    &=& 
        \begin{cases}
        0 &\quad \text{for } 0\leq t< \frac{T}{2}  \\
        2\frac{\partial \mathcal{L}}{\partial \omega} (\hat\rho)\at[]{\omega=\omega_0} = 2i[\hat\rho, \an^\dagger\an] &\quad \text{for } \frac{T}{2}\leq t< T
    \end{cases}.
\end{eqnarray}
The two hypothetical Lindbladian operators generate the dynamics and the state at $T$,
\begin{eqnarray}
    \R_1(T) &=& \mathcal{T} e^{\int_0^T dt \mathcal{L}_1(t)}\pro{0}{0}=\mathcal{T} e^{\int_0^T dt \mathcal{L}(t)}\pro{0}{0}=\R(T),\\
    \R_2(T) &=& \mathcal{T} e^{\int_0^T dt \mathcal{L}_2(t)}\pro{0}{0}=\mathcal{T} e^{\int_0^T dt \mathcal{L}(t)}\pro{0}{0}=\R(T),\\
    \partial_\omega \R_1(T) &=&  2\int_0^{T/2} dt \mathcal{T} e^{\int_t^T d\tau \mathcal{L}_1(\tau)} i[\R(t),\an^\dagger\an]
    =2\int_0^{T/2} dt \mathcal{T} e^{\int_t^T d\tau \mathcal{L}(\tau)} i[\R(t),\an^\dagger\an],\\
    \partial_\omega \R_2(T) &=&  2\int_{T/2}^T dt \mathcal{T} e^{\int_t^T d\tau \mathcal{L}_2(\tau)} i[\R(t),\an^\dagger\an]
    =2\int_{T/2}^{T} dt \mathcal{T} e^{\int_t^T d\tau \mathcal{L}(\tau)} i[\R(t),\an^\dagger\an],
\end{eqnarray}
where we use the identity from Eq.~\eqref{eq:L_1andL_2}.
The original $\R(T)$ can be recovered as the average of $\R_1(T)$ and $\R_2(T)$, $\R(T) = \tfrac{1}{2}(\R_1(T)+\R_2(T))$. From the convexity property of the QFI, we obtain the inequality,
\[\label{eq:convex_QFI_bi}
\mathcal{F}_\omega(\R(T)) =\mathcal{F}_\omega(\tfrac{1}{2}(\R_1(T)+\R_2(T))) \leq \frac{1}{2}(\mathcal{F}_\omega (\R_1(T))+ \mathcal{F}_\omega (\R_2(T))).
\]
Moreover, we can obtain QFI corresponding to each of $\R_1(T)$ and $\R_2(T)$,
\begin{eqnarray}
    \mathcal{F}_\omega(\R_1(T)) &=& \mathcal{B}\left(\R_1(T),\frac{\partial\R_1(T)} {\partial\omega}\right) \\
    &=& \mathcal{B}\left(\R(T),2\int_{0}^{T/2} dt \mathcal{T} e^{\int_t^T d\tau \mathcal{L}(\tau)} i[\R(t),\an^\dagger\an]\right)\notag \\
    &=&4\mathcal{B}\left(\R(T),\int_{0}^{T/2} dt \mathcal{T} e^{\int_t^T d\tau \mathcal{L}(\tau)} i[\R(t),\an^\dagger\an]\right),\label{eq:QFI_R1} \\
    \mathcal{F}_\omega(\R_2(T)) &=& \mathcal{B}\left(\R_2(T),\frac{\partial\R_2(T)} {\partial\omega}\right) \\
    &=&\mathcal{B}\left(\R(T),2\int_{T/2}^{T} dt \mathcal{T} e^{\int_t^T d\tau \mathcal{L}(\tau)} i[\R(t),\an^\dagger\an]\right)\notag\\
    &=&4\mathcal{B}\left(\R(T),\int_{T/2}^{T} dt \mathcal{T} e^{\int_t^T d\tau \mathcal{L}(\tau)} i[\R(t),\an^\dagger\an]\right)\label{eq:QFI_R2}.
\end{eqnarray}
Combining Eq.~\eqref{eq:convex_QFI_bi}, Eq.~\eqref{eq:QFI_R1}, and Eq.~\eqref{eq:QFI_R2}, the inequality holds:
\[
\mathcal{F}_\omega(\R(T)) \leq 2\left(\mathcal{B}\left(\R(T),\int_{0}^{T/2} dt \mathcal{T} e^{\int_t^T d\tau \mathcal{L}(\tau)} i[\R(t),\an^\dagger\an]\right)+\mathcal{B}\left(\R(T),\int_{T/2}^{T} dt \mathcal{T} e^{\int_t^T d\tau \mathcal{L}(\tau)} i[\R(t),\an^\dagger\an]\right)\right).
\]
By generalizing this inequality to an arbitrary integer $M$, we obtain the following:
\[\begin{split}
    \mathcal{F}_\omega(\R(T)) &\leq M \sum_{j=1}^M \mathcal{B}\left(\R(T),\int_{\tfrac{j-1}{M}T}^{\tfrac{j}{M}T} dt \mathcal{T} e^{\int_t^T d\tau \mathcal{L}(\tau)} i[\R(t),\an^\dagger\an]\right) \\
&= M \sum_{j=1}^M \mathcal{B}\left(\R(T), \frac{T}{M} \mathcal{T} e^{\int_t^T d\tau \mathcal{L}(\tau)} i[\R(\tfrac{jT}{M}),\an^\dagger\an]
\right)\\
&= T \frac{T}{M} \sum_{j=1}^M \mathcal{B}\left(\R(T), \mathcal{T} e^{\int_t^T d\tau \mathcal{L}(\tau)} i[\R(\tfrac{jT}{M}),\an^\dagger\an]
\right),
\end{split}
\]
whose summation becomes integration at the limit $M\to \infty$, such that,
\[
\label{eq:F_B_T_bound}
    \mathcal{F}_\omega(\R(T)) \leq T \int_0^T dt  \mathcal{B}\left(\R(T), \mathcal{T} e^{\int_t^T d\tau \mathcal{L}(\tau)} i[\R(t),\an^\dagger\an]\right).
\]
Finally, from the monotonicity property of the QFI under a quantum channel,
\[\begin{split} 
\mathcal{B}\left(\R(T), \mathcal{T} e^{\int_t^T d\tau \mathcal{L}(\tau)} i[\R(t),\an^\dagger\an]\right)
=\mathcal{B}\left(\mathcal{T}e^{\int_t^T d\tau \mathcal{L}(\tau)}\rho(t),\mathcal{T}e^{\int_t^T d\tau \mathcal{L}(\tau)} i[\hat\rho(t),\an^\dagger\an]\right) \leq \mathcal{B}\left(\hat\rho(t),i[\hat\rho(t),\an^\dagger\an]\right).
\end{split}
\]
In order to show this, we consider the following identity between $\mathcal{B}$ and QFI with respect to $\theta$:
\[\label{eq:B_eq_F}
\mathcal{B}\left(\hat\rho(t),i[\hat \rho(t),\an^\dagger\an]\right)=\mathcal{F}_\theta(\R_\theta(t)),
\]
where $\R_\theta(t)=\exp(-i\theta\an^\dagger \an)\R(t)\exp(i\theta\an^\dagger \an )$.  Now, let us make use of the relationship between the QFI and the quantum fidelity,
\[
\mathcal{F}_\theta(\R_\theta(t)) = 4\frac{d^2}{d\theta^2}(1 - F(\R_{-d\theta/2}(t),\R_{d\theta/2}(t)) ), 
\]
where the fidelity between two quantum states is defined as $F(\R,\R') = \tr(\sqrt{\sqrt{\R}\R'\sqrt{\R}})^2$.
By using the fact that the fidelity is monotonic increasing under a quantum channel, which includes the time evolution under Lindblad dynamics, we obtain
\[\begin{split}\label{eq:ineq_lindblad_fidelity}
    \mathcal{F}_\theta \left(\mathcal{T}e^{\int_t^T d\tau \mathcal{L}(\tau )}\R_\theta(t)\right) &= 4\frac{d^2}{d\theta^2}\left(1 - F\left(\mathcal{T}e^{\int_t^T d\tau \mathcal{L}(\tau)}\R_{-d\theta/2}(t),\mathcal{T}e^{\int_t^T d\tau \mathcal{L}(\tau)}\R_{d\theta/2}(t)\right) \right)\\
    &\leq 4\frac{d^2}{d\theta^2}(1 - F(\R_{-d\theta/2}(t),\R_{d\theta/2}(t))  \\
    &=\mathcal{F}_\theta(\R_\theta(t)).
\end{split}
\]
Moreover, we can find that
\[\label{eq:B_(T,T)=F}
\mathcal{F}_\theta \left(\mathcal{T}e^{\int_t^T d\tau \mathcal{L}(\tau )}\R_\theta(t)\right) = \mathcal{B}\left(\mathcal{T}e^{\int_t^T d\tau \mathcal{L}(\tau)}\hat\rho(t),\mathcal{T}e^{\int_t^T d\tau \mathcal{L}(\tau)} i[\hat\rho(t),\an^\dagger\an]\right)  =\mathcal{B}\left(\R(T), \mathcal{T} e^{\int_t^T d\tau \mathcal{L}(\tau)} i[\R(t),\an^\dagger\an]\right).
\]
Combining Eq.~\eqref{eq:B_eq_F}, Eq.~\eqref{eq:ineq_lindblad_fidelity}, and Eq.~\eqref{eq:B_(T,T)=F}, we have the inequality,
\[
\label{eq:B_bound_monotonicity}
\mathcal{B}\left(\R(T), \mathcal{T} e^{\int_t^T d\tau \mathcal{L}(\tau)} i[\R(t),\an^\dagger\an]\right) \leq \mathcal{B}\left(\hat\rho(t),i[\hat\rho(t),\an^\dagger\an]\right).
\]
By combining, Eqs.~\eqref{eq:F_B_T_bound} and \eqref{eq:B_bound_monotonicity}, we obtain
\[\begin{split}\label{eq:ineq_QFI_integral}
    \mathcal{F}_\omega(\R(T)) &\leq T \int_0^T dt  \mathcal{B}\left(\R(T), \mathcal{T} e^{\int_t^T d\tau \mathcal{L}(\tau)} i[\R(t),\an^\dagger\an]\right)\\
&=T\int_0^T dt \mathcal{B}\left(\mathcal{T}e^{\int_t^T d\tau \mathcal{L}(\tau)}\hat\rho(t),\mathcal{T}e^{\int_t^T d\tau \mathcal{L}(\tau)} i[\hat\rho(t),\an^\dagger\an]\right) \\
    &\leq T\int_0^T dt \mathcal{B}(\hat\rho(t),i[\hat\rho(t),\an^\dagger\an]) \\
    &= T\int_0^T dt  \mathcal{F}_\theta(\R_\theta(t)).
\end{split}
\]
As a result, the time-ordered integral is no longer required, and the QFI is upper bounded by the integral of the instantaneous QFI over time, $\mathcal{F}_\theta(\R_\theta(t))$ for $\R_\theta(t)=\exp(-i\theta\an^\dagger \an)\R(t)\exp(i\theta \an^\dagger \an)$. As a quantum state $\R(t)$ at every time $t$ is described by a squeezed thermal state, its QFI can be directly evaluated as a function of mean photon number $\mu$ and squeezing $r$~\cite{Pinel2013},
\[
\mathcal{F}_\theta(\R_\theta(t))= \frac{8\mu(t)}{2\mu(t) + 1}\sinh^2 2r(t)\leq 4\sinh^2 2r(t).
\]
By combining this with Eq.~\eqref{eq:ineq_QFI_integral}, we obtain the upper bound of the total QFI,
\begin{equation}
        \mathcal{F}_\omega(\R(T)) \leq 4T\int_0^T dt \sinh^2 2r(t),
    \end{equation}
which completes the proof.
\end{proof}

\end{document}